\pgfplotsset{width=7cm,compat=1.3}
\def\BibTeX{{\rm B\kern-.05em{\sc i\kern-.025em b}\kern-.08em
		T\kern-.1667em\lower.7ex\hbox{E}\kern-.125emX}}
\newcommand{\cI}{\mathcal{I}}
\newtheorem{theorem}{{Theorem}}
\newtheorem{lemma}[theorem]{{Lemma}}
\newtheorem{IEEEproof}{{Proof}}
\def\cX{\mathcal{X}}
\newcommand{\mdef}{\stackrel{\text{\tiny def}}{=}}
\newcommand{\E}{\mathbb{E}}
\def\P{\mathbb{P}}
\def\P{\mathbb{P}}
\title{Joint Information and Mechanism Design for Queues with Heterogeneous Users}
\author{Nasimeh Heydaribeni and Achilleas Anastasopoulos%
	\thanks{This work was supported in part by NSF Grant ECCS-1608361.}
	\thanks{The authors are with the Department of Electrical Engineering and Computer Science, University of Michigan, Ann Arbor, MI, 48105 USA {\tt\small {heydari,anastas}@umich.edu}}
}
\begin{document}

	\maketitle
	\thispagestyle{empty}
	\pagestyle{empty}
	

	\maketitle
	
	\begin{abstract}
		We consider a queue with an unobservable backlog by the incoming users. There is an information designer that observes the queue backlog and makes recommendations to the users arriving at the queue whether to join or not to join the queue. The arriving users have payoff relevant private types. The users, upon arrival, send a message, that is supposed to be their type, to the information designer if they are willing to hear a recommendation. The information designer then creates a recommendation for that specific type of user. The users have to pay a tax in exchange for the information they receive. In this setting, the information designer has two types of commitments. The first commitment is the recommendation policy and the second commitment is the tax function.
		We combine mechanism design and information design to study a queuing system with heterogeneous users.
		In this setting, the information designer is a sender of the information in the information design aspect and a receiver in the mechanism design aspect of the model.
		We formulate an optimization problem that characterizes the solution of the joint design problem.
		We characterize the tax functions and provide structural results for the recommendation policy of the information designer.
	\end{abstract}
	

	\section{Introduction}
	Decentralized information is an important and inevitable aspect of today's systems. Each  agent in a system can own some information that others might be interested to know. On the other hand, agents usually act strategically and might not be willing to share their information with others. Therefore, incentives have to be put in place to motivate agents to reveal some parts of their information. There are two main approaches, mechanism design and information design,  where the sharing of information between agents and their incentives of doing so is studied.
	
	In mechanism design
	\cite{borgers2015introduction,HuRe06, KaTe15,khalili2019incentivizing, HeAn18a, HeAn20, SiAn20,huang2006auction,HeAn18,rasouli2014electricity,9146379,zhang2020optimal,8824110}, there are a number of agents with some  private information. There is a designer that designs messages together with allocation and tax/subsidy functions.  Agents, as ``senders'' of the information, send  their messages, which could convey true or false information, to the central authority, acting as  the ``receiver'' of the information. Upon reception of these messages, the central authority will then determine their allocations and taxes/subsidies. The incentives for the agents to send truthful messages are created through allocation and tax functions. Note that the central authority commits to the allocation and tax function and can not change these functions after hearing the agents' messages.
	
	In information design~\cite{kamenica2011bayesian,bergemann2019information,farhadi2020dynamic,sayin2021bayesian,anunrojwong2020information,ARXIV-VERSION_HeSa21}, there is usually one ``sender''  who owns a piece of information. The sender shares some part of his information with a number of agents as ``receivers''  by sending messages to them. The messages are created according to a policy that is to be designed by the sender.  The agents will then interpret the messages using the policy based on which the messages are generated and then they take some actions. The sender has to optimally choose his policy to steer agents' actions to a desired direction. Note that similar to the mechanism design framework, the sender commits to the policy he is using to create messages. The difference is that, the commitment in information design is from the sender while in mechanism design it is from the receiver.
	Information design problems with one sender and one receiver are usually referred to as ``Bayesian persuasion''~\cite{kamenica2011bayesian}. In \cite{kamenica2011bayesian}, the authors present a geometric form of interpreting information design and show when it is profitable for the designer to not share some part of the information. However, when there are multiple receivers, the information design problem becomes more complex and notions of equilibria must be introduced to analyze the game. As it is shown in \cite{bergemann2016bayes}, an information design model with multiple receivers is in fact a game with incomplete information and the set of outcomes of the information design problem corresponds to the set of Bayes-correlated equilibria (BCE). In the definition of BCE in \cite{bergemann2016bayes}, the information designer follows a direct strategy where he directly recommends actions to the players.  The strategy of the information designer has to satisfy an obedience condition, that is, each player has to be willing to follow the recommendation.

	In this paper, we combine the two approaches and study joint information and mechanism design for a queuing system. In our model, there is a queue with an unobservable backlog by the incoming users. An information designer observes the queue backlog and makes recommendations to the users arriving at the queue as to whether to join or not to join the queue. This part of the model is an information design setting with the information designer being the sender and the arriving users being the receivers of the information. In our model, the arriving users have payoff relevant private types (we consider a binary type, so users can either be of type~1 or type~2). Upon arrival to the queue, the users send a message to the information designer, that is supposed to be their type. The information designer then creates a recommendation for that specific user type. The users have to pay a tax in exchange for the information they receive. The information sent by the information designer and the tax function incentivize the users to reveal their true types. Note that in this setting, the information designer has two types of commitments. The first commitment is the policy that he uses to generate the recommendations given the queue backlog and declared user type, and the second commitment is the tax function. In this setting, the information designer is a sender of the information in the information design aspect and a receiver in the mechanism design aspect of our model.

	There are some works in information design where similar to our model, the receivers have private information, e.g., private types.  There are two approaches to these problems: without elicitation and with elicitation of the private information. In the case of information design without elicitation~\cite{kolotilin2017persuasion}, the information designer has to send a list of suggestions for each possible type of the receivers. This setting is referred to as public persuasion by Kolotilin et al. in~\cite{kolotilin2017persuasion}.  In the case of information design with elicitation~\cite{kolotilin2017persuasion,bergemann2018design, daskalakis2016does}, receivers report their types and instead of the obedience constraint, the decision rule of the information designer should satisfy an incentive constraint. The incentive constraint makes sure each type of the receiver prefers her own recommendation over other recommendations that she can possibly hear if she reports her type untruthfully. Kolotilin et al.~\cite{kolotilin2017persuasion} refer to this setting as private persuasion.
	In~\cite{bergemann2018design}, authors utilize monetary transfers, i.e., taxes/subsidies, to elicit the private types, as opposed to the model in~\cite{kolotilin2017persuasion} where elicitation is done without taxes.   In \cite{horner2016selling}, the persuasion is done not only through information design0 but also by using monetary transfers. However, the receiver does not have any private information. Similarly, monetary transfers have been utilized for  Bayesian persuasion in~\cite{li2017model} but there is a single receiver and she does not have a private type.
	In~\cite{yamashita2018optimal}, there is also some type of joint mechanism and information design but the information disclosure is public and not a function of the users' reported types. In addition, there are no monetary transfers. In \cite{cai2020third}, authors have studied the effect of a third-party data provider on simple mechanisms and in this sense, they have considered a joint information and mechanism design problem. They show that simple mechanisms fail to approximate the optimal revenue in the presence of a third-party signal.

	Our formulation of joint information and mechanism design is similar in spirit to the one discussed in~\cite{bergemann2018design}, where there are multiple players with private prior beliefs about a state of the world. The information designer offers a menu of experiments (that convey information) that players can choose from and they have to pay a tax in return. The information designer maximizes his revenue over the set of experiments and taxes subject to incentive compatibility and individual rationality constraints. Our setting can be considered a special case of the general framework discussed in~\cite{bergemann2018design}. However, the specifics of our model  such as users' utilities being linear in their private types, enable us to evaluate explicit tax functions and formulate a linear optimization problem for the information designer and study the properties of its solution.
	The queuing system presented in this paper builds on the model by~\cite{lingenbrink2019optimal} with the main difference being that in our model the users have private types where in~\cite{lingenbrink2019optimal}, the incoming traffic is uniform (there are some discussions on the case of different user types but these types are assumed to be known to the information designer).
	
	The rest of the paper is structured as follows. In section~\ref{model}, we discuss the model. In section~\ref{mechobj}, the mechanism objectives are discussed. The tax function is presented in section~\ref{taxfun}. The information design problem is formalized in section~\ref{infodes}. Some structural properties are presented in section~\ref{prop}.  We present results of numerical analysis  in section~\ref{numerical} and we conclude in section~\ref{conc}. The proofs of lemmas and theorems can be found in \optv{submission}{the extended version of this paper \cite{ARXIV-VERSION_HeAn21}.}\optv{arxiv}{the Appendix at the end of the paper.}
	

	\section{Model}\label{model}
	
	We consider a service provider with a service rate of 1.  There is a queue with infinite capacity and users arrive at the queue according to a Poisson arrival process with a rate $\lambda>1$. We denote the number of users in the queue by $x$ and we have $X \sim \mu(\cdot)$, where $ \mu(\cdot)$ is the stationary distribution of the queue backlog. The users have payoff relevant private types $i \in \cI= \{1, 2\}$ and the a-priori type distribution is known $I \sim P_I(\cdot)$.
	The queue backlog is unobservable by users. There is an information designer who observes the queue backlog and gives the users an option to hear a recommendation about joining or leaving the queue.  The users have to follow the recommendation if they decide to hear it and they will have to pay a tax in exchange for the recommendation.
	If a user chooses not to hear the recommendation, she will decide whether or not to join the queue on her own. Note that if a user decides not to hear the signal, she joins the same queue that she would have joined had she decided to hear the recommendation.
	One can define the following sequence of actions that users take at the time of arrival. Note that we refer to a user by she and to the information designer by he.

\begin{itemize}
		\item	First, the user with type $i$ who has arrived at the queue decides to hear  the recommendation or not by choosing $d=g(i), \ d \in \{0,1\}$, where $d=1$ means she  hears the recommendation and then follows it.

		\item	If the user decides not to hear the recommendation, i.e., $d=0$, she will decide to join or not join the queue by choosing $e=k(i),\  e  \in \{0,1\}$, where $e=1$ means she joins the queue.
		
		\item If the user decides to hear the recommendation, i.e., $d=1$, she has to send a message $m=f(i), \ m \in \cI$ to the information designer.
		This is a direct mechanism and the message $m$ sent by a user with type $i$ is supposed to be her type.
		The information designer determines a tax $t(m)$ that is to be paid by the user in return for hearing the recommendation.
		He then generates a (randomized) recommendation $s$ where $S \sim \sigma(\cdot|x,m)$ and announces it to the user. The distribution  $\sigma(\cdot|x,m)$ is called the recommendation policy. The user will then follow the recommendation, i.e., if $s=1$, she joins the queue  and if $s=0$ she leaves.
	\end{itemize}
	
	Figure~\ref{gametree} depicts the extended form of the game faced by a user with type $i$ at the time of arrival.
	Based on the steps described above, we can denote the private history (or more appropriately, information set) of a user by $h\in \mathcal{H}$, where
	the set of private histories is defined as $\mathcal{H}=\{(i,d=0,e=0)_{(i \in \cI)}, \ (i,d=0,e=1)_{(i \in \cI)}, \ (i,d=1,m,s=0)_{(i,m \in \cI)}, \ (i,d=1,m,s=1)_{(i,m \in \cI)}\}$. The utility of a user for  each of these histories is denoted by $u(h)$, and is described in the following.
	
	The users have to pay a price $p$ for the service they receive if they opt out of the mechanism, but choose to enter the queue. As mentioned before, the users pay a tax $t(m)$ in exchange for the recommendation if they decide to hear it. They do not pay anything extra when they receive service after hearing the recommendation (the price of the service is included in the tax function).
	The user with type $i$ also receives a reward  of $i v(x)$ by joining the queue of backlog $x$, where $v(\cdot)$ is a decreasing function, which can also be negative for large enough $x$.
	Therefore, for $h=(i,d=0,e=1)$, the user receives the expected utility of $u(i,d=0,e=1)=i \bar{v}-p$  where $\bar{v}=\mathbb{E}[v(X)]=\sum_{x=0}^{\infty}v(x)\mu(x)$.
	For $h=(i,d=0,e=0)$, she leaves the queue and receives $u(i,d=0,e=0)=0$.
	For $h=(i,d=1,m,s=1)$, the user receives the expected utility of $u(i,d=1,m,s=1)=i\mathbb{E}[v(X)|h]-t(m)$. Finally, for $h=(i,d=1,m,s=0)$, the user receives the utility $u(i,d=1,m,s=0)=-t(m)$.

	\vspace{-0.3cm}
	\begin{figure}[ht]
		\centering
		\includegraphics[width=8cm]{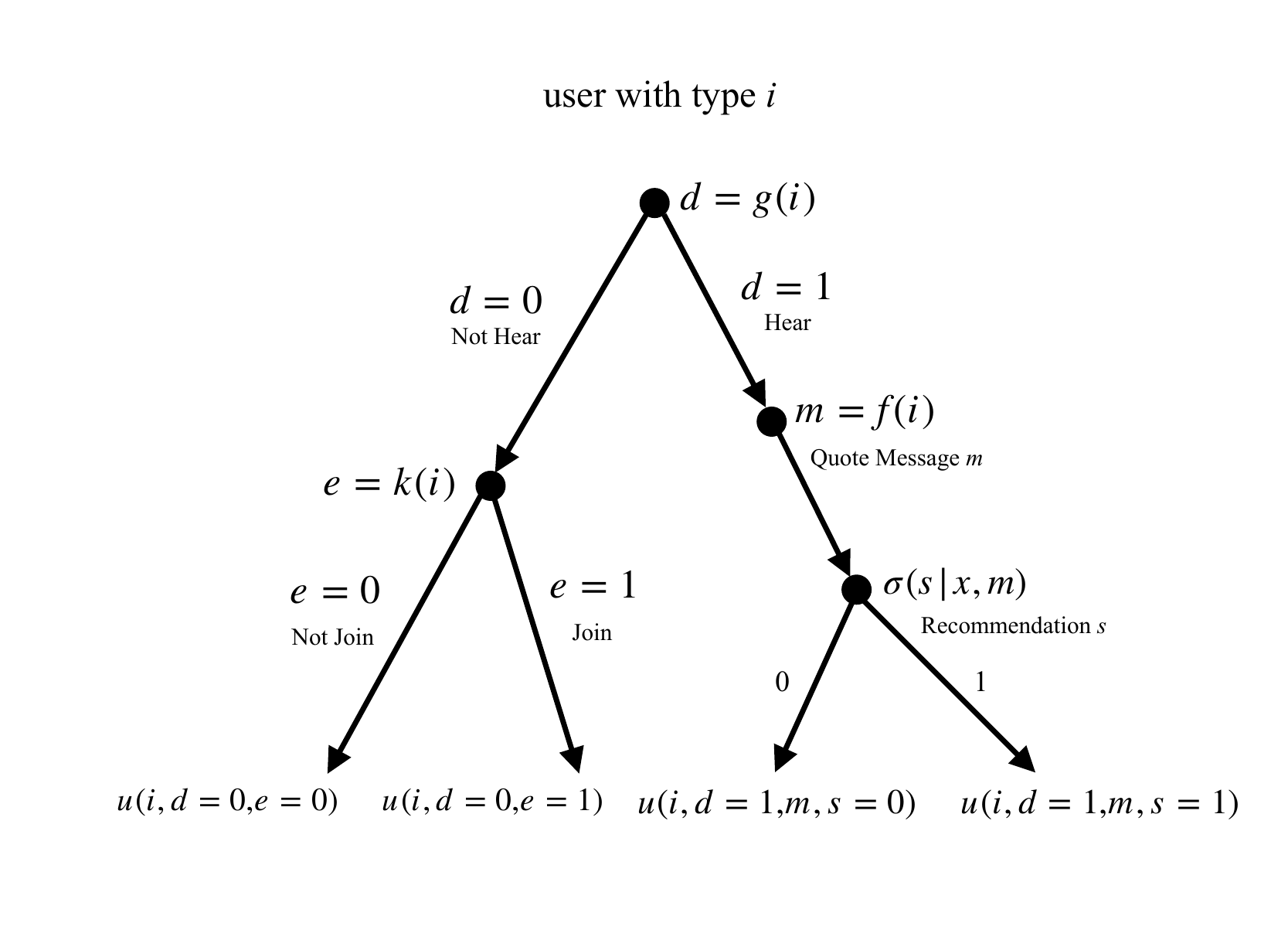}
		\vspace{-0.5cm}
		\caption{Extended form of the game faced by each user at the arrival time. }
		\label{gametree}
	\end{figure}
	
	One can express the joint probability distribution of the random variables described in this model as follows.
	\begin{align}
		\P&(x,i,d,e,m,s)\nonumber \\&=\mu(x)P_I(i)\P(d,e,m|i)\sigma(s|x,m).
		\label{jointdist}
	\end{align}
		Note that $\P(d,e,m|i)$ is determined by the strategy of the user with type $i$ and we have
	\begin{align}
	&\P(m|i)=\mathbf{1}_{f(i)}(m)\\
	&\P(d|i)=\mathbf{1}_{g(i)}(d)\\
	&\P(e|i)=\mathbf{1}_{k(i)}(e),
	\end{align}
	where $\mathbf{1}_{a}(b)=\left\{\begin{array}{cc}
	1 & \quad \text{if } a=b\\
	0 & \quad \text{o.w.}
	\end{array}\right.$
	and the stationary distribution of $X$, $\mu(\cdot)$, depends on $\sigma$ and is characterized in the next section in Lemma~\ref{stationary}.

	\section{Mechanism Objectives}\label{mechobj}
	In the previous section, we described the model and introduced the actions and messages involved in it. To summarize, we have actions/decisions $d=g(i)$, $e=k(i)$, $m=f(i)$ that are taken by the user with type $i$ and $S\sim \sigma(\cdot|x,m)$ and $t(m)$ that are to be designed by the information designer.
	
	The mechanism $\mathbf{M}\mdef (\sigma,t)$  is designed by the information designer to have the following properties:
\begin{itemize}
		\item IR: The mechanism should be individually rational (IR). That is, both types of users should prefer to hear the recommendation, i.e., $d=g(i)=1, \ \forall i \in \cI$. For the mechanism to be IR, we must have the following condition.
		\begin{align}
			\E(u(i,d=1,M,S) ) \geq \E(u(i,d=0,E) ),  \forall i \in \cI.
		\end{align}
		The above equation implies that the expectation is taken at the step of the game where the user has to decide on $d$ and the decision should always be $d=g(i)=1$.
		
		\item DSIC: The mechanism should be dominant strategy incentive compatible (DSIC). That is, all users that choose to hear the recommendation, should act truthfully no matter what other users do, i.e., $m=f(i)=i$ for all $i$ for which $d=g(i)=1$. For the mechanism to be DSIC we should have the following.
		\begin{align}
			f(i)=\arg \max_m \E(u(i,d=1,m&,S) )=i, \nonumber \\
			&  \forall i \ s.t.\  g(i)=1.
		\end{align}
		Note that the utility $u(i,d=1,m,S)$ does not depend on the messages of other users and it only depends on the stationary distribution of $X$, which is affected by the strategies $f(\cdot)$ of users and not the actual messages quoted. Also, note that for whatever strategy $f(\cdot)$ of users, i.e., truthful or not,  it must be  a dominant strategy for each user to quote her message truthfully.

		\item The mechanism should maximize the information designer's expected revenue. That is, the information designer solves the following optimization problem.
		\begin{align}
			\sigma^*,t^*\in \arg\max_{\sigma,t}\lambda\mathbb{E}[ t(M)g(I)],
		\end{align}
		where $M=f(I)$. The above objective comes from the fact that the tax is only paid by the users who participate in the mechanism, i.e., $g(i)=1$.
	\end{itemize}
	
	Given the joint distribution of the random variables in \eqref{jointdist}, we can calculate the expected utilities in IR and DSIC constraints as follows.
	\begin{subequations}
		\begin{align}
			\E&(u(i,d=1,M,S) )\nonumber \\
			=&u(i,d=1,f(i),s=1)\P(S=1|i,d=1,f(i)) \nonumber \\
			&+u(i,d=1,f(i),s=0)\P(S=0|i,d=1,f(i))\\
			=&\sum_{x=0}^{\infty}iv(x)\P(x,S=1|i,d=1,f(i))-t(f(i))\\
			=&\sum_{x=0}^{\infty}iv(x)\mu(x)\sigma(1|x,f(i))-t(f(i))
		\end{align}
	\end{subequations}
	\begin{subequations}
		\begin{align}
			\E&(u(i,d=0,E) )
			=u(i,d=0,e=1)\mathbf{1}_{k(i)}(e=1)\\
			&=k(i) \sum_{x=0}^{\infty}(iv(x)-p)\mu(x)=k(i) (i\bar{v}-p).
			\label{eq:util_e1}
		\end{align}
	\end{subequations}
	
	The following lemma characterizes the function $k(i)$.
	\begin{lemma}
		The function $k(i)$ is given by the following equation.
		\begin{align}
			k(i)=\left\{\begin{array}{cc}
				1 & \quad \text{if } \ i\bar{v}-p\geq 0\\
				0 & \quad \text{o.w.}
			\end{array}\right.
		\end{align}
	\end{lemma}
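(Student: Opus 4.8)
The plan is to observe that $k(i)$ is simply the utility-maximizing join/no-join decision of a type-$i$ user \emph{conditioned on having already chosen $d=0$}, i.e., on having opted out of the recommendation. Since the user has declined to hear any signal, she receives no information about the backlog beyond the prior, so her belief on $X$ remains the stationary distribution $\mu(\cdot)$. A rational user then compares the only two reachable histories at this node: joining, with expected utility $u(i,d=0,e=1)=i\bar{v}-p$ from \eqref{eq:util_e1}, and leaving, with $u(i,d=0,e=0)=0$.

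First I would write the user's problem at the $e$-decision node as $k(i)=\arg\max_{e\in\{0,1\}} \E\big(u(i,d=0,e)\big)$, and substitute the two utility values. This reduces the problem to comparing $i\bar{v}-p$ against $0$. I would then conclude that it is optimal to set $e=1$ exactly when $i\bar{v}-p\ge 0$ and $e=0$ otherwise, which is the stated piecewise form of $k(i)$, adopting the standard tie-breaking convention that a user who is indifferent joins the queue (so the inequality is taken to be non-strict on the joining side).

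Since this is a choice between two scalar payoffs, there is no real analytical difficulty here; the only points that require care are (i) justifying that the relevant belief is the stationary distribution $\mu$ rather than any updated posterior, which holds precisely because $d=0$ means no signal is observed, and (ii) fixing the indifference tie-break to match the $\ge$ in the claim. The ``hard part,'' such as it is, is merely making explicit that the expectation in $\E\big(u(i,d=0,e=1)\big)=i\bar{v}-p$ is taken under $\mu$ and is independent of the mechanism $\mathbf{M}$ at this off-path node, so that the comparison is well-defined.
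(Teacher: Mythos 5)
Your proposal is correct and follows essentially the same route as the paper: the paper's proof is the one-line observation that the result follows by comparing the utility in \eqref{eq:util_e1}, namely $i\bar{v}-p$, with the value $0$ of leaving. Your additional remarks on the belief being the stationary distribution and on the tie-breaking convention are sound elaborations of that same comparison, not a different argument.
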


	\begin{IEEEproof}
		The result is evident by comparing the utility in~\eqref{eq:util_e1} with 0.
	\end{IEEEproof}

	One can use this lemma and the expected values of the utilities to further simplify the IR and DSIC condition as follows.
	
	\begin{subequations}
		IR:
		\begin{align}
			\sum_{x=0}^{\infty}& i v(x)\mu(x)\sigma(1|x,i)-t(i) \nonumber  \\
			&\geq
			(\sum_{x=0}^{\infty}iv(x)\mu(x)-p) ^+\quad \forall i \in \cI.
			\label{IR}
		\end{align}
		
		DSIC:
		\begin{align}
			f(i) &=\arg \max_m\sum_{x=0}^{\infty} i v(x) \mu(x)\sigma(1|x,m)-t(m) \nonumber \\
			&=i,  \qquad \forall i \in \cI,
		\end{align}
	\end{subequations}
	where $(a)^+=\max(a,0)$.
	
	Given IR and DSIC constraints, the optimization problem of the information designer can be simplified as follows.
	\begin{align}
		\sigma^*,t^*\in \max_{\sigma,t}\lambda \mathbb{E}[t(I)]
		=\max_{\sigma,t}\lambda (P_I(1)t(1)+P_I(2)t(2)).
	\end{align}

	In the expected utilities, we see the stationary distribution of $X$,  $\mu(\cdot)$ which is characterized in the following lemma.
	\begin{lemma}
		If we assume that IR and DSIC hold, the stationary distribution of $X$, $\mu(\cdot)$, is given by the following equation.
		\begin{subequations}
			\begin{align}
				&\mu(x+1)\hspace{-0.05cm} =\hspace{-0.05cm} \lambda \mu(x)(P_I(1)\sigma(1|x,1)+P_I(2)\sigma(1|x,2))\\
				&\sum_{x=0}^{\infty}\mu(x)\hspace{-0.05cm} =1.
			\end{align}
		\end{subequations}
		\label{stationary}
	\end{lemma}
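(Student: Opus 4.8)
The plan is to observe that, once IR and DSIC are imposed, the backlog $X$ evolves as a continuous-time birth--death Markov chain, and then to apply the local balance equations that characterize the stationary distribution of any such chain.

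First I would pin down the transition rates. By IR we have $g(i)=1$ for all $i\in\cI$, so every arriving user hears and follows the recommendation; by DSIC we have $f(i)=i$, so a type-$i$ user reports her true type $m=i$ and, being obliged to obey, joins the queue precisely when $S=1$ with $S\sim\sigma(\cdot|x,i)$. Hence, conditioned on the backlog being $x$, a type-$i$ arrival joins with probability $\sigma(1|x,i)$, and averaging over the type distribution the probability that an arbitrary arrival joins in state $x$ is $P_I(1)\sigma(1|x,1)+P_I(2)\sigma(1|x,2)$. Since arrivals form a Poisson process of rate $\lambda$ and each arrival independently joins with this state-dependent probability, the birth rate out of state $x$ is $\lambda\bigl(P_I(1)\sigma(1|x,1)+P_I(2)\sigma(1|x,2)\bigr)$, while the death (service) rate is $1$ in every state $x\ge 1$.

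Next I would write the balance equations. Because transitions occur only between neighboring integer states, the chain is a birth--death process, whose stationary distribution satisfies the cut relation equating probability flow from $x$ to $x+1$ with flow from $x+1$ to $x$, namely
\begin{align}
\mu(x)\,\lambda\bigl(P_I(1)\sigma(1|x,1)+P_I(2)\sigma(1|x,2)\bigr)=\mu(x+1)\cdot 1 .
\end{align}
This is exactly the stated recursion, and imposing $\sum_{x=0}^{\infty}\mu(x)=1$ supplies the normalization. I would either invoke the standard detailed-balance property of birth--death chains directly, or derive it by induction from the global balance equations: balance at the empty state gives $\lambda\bigl(P_I(1)\sigma(1|0,1)+P_I(2)\sigma(1|0,2)\bigr)\mu(0)=\mu(1)$, and the full balance equation at state $x\ge 1$, combined with the recursion at $x-1$, telescopes to leave precisely the single-edge relation above.

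The main obstacle is the modeling reduction rather than the algebra: one must argue carefully that under IR and DSIC the aggregate join probability in state $x$ is genuinely $P_I(1)\sigma(1|x,1)+P_I(2)\sigma(1|x,2)$, independent of past history, so that the backlog is a bona fide Markov process of birth--death type (with no contribution from users who opt out, since IR forces $g(i)=1$). Once this reduction is in place, the recursion and normalization follow from the classical stationary analysis of such chains. I would also note that the recursion tacitly assumes positive recurrence, i.e.\ that a normalizable $\mu$ exists, which is what makes the expectations in the IR and DSIC constraints well defined.
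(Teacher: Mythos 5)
Your proposal is correct and follows essentially the same route as the paper: both model the backlog as a continuous-time birth--death chain with birth rate $\lambda\,\P(S=1|x)=\lambda\bigl(P_I(1)\sigma(1|x,1)+P_I(2)\sigma(1|x,2)\bigr)$ (using IR for full participation and DSIC for truthful reporting) and death rate $1$, then obtain the recursion by telescoping the global balance equations, which is exactly the paper's ``substituting recursively'' step. Your explicit justification of the aggregate join probability is a welcome bit of care that the paper leaves implicit, but it is not a different argument.
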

	
%
%

	\section{Tax Function}\label{taxfun}
	In this section we introduce the tax functions for the mechanism and prove DSIC. We will also prove that this type of tax function maximizes the designer's revenue.
	
	We consider the following tax function.
	\begin{subequations}
		\label{tax}
		\begin{align}
			t(1)=&t_0+q(1)\\
			t(2)=&t_0+2 q(2)-q(1), 
		\end{align}
	\end{subequations}
	where we define
	\begin{align}
		q(i)=\sum_{x=0}^{\infty}v(x) \mu(x) \sigma(1|x,i), \ \forall i \in \cI,
	\end{align}
	which can be viewed as the ``allocation'' to a user with quoted message $i$. 	We refer to $t_0$ as the tax offset.
	
	Notice that there are two degrees of freedom in the tax functions, $t_0$ and $q$, which is determined by $\sigma$. We will see in the next theorem that the given  tax function guarantees DSIC. The other two requirements of the mechanism, i.e., IR and revenue maximization will determine the two degrees of freedom in the tax function.
	
	%
	
	\begin{theorem}
		The mechanism $\mathbf{M}$ is DSIC if
		\begin{itemize}
			\item $q(2)\geq q(1)$.
			\item the tax function is given by  equation \eqref{tax}.
		\end{itemize}
		Furthermore, the given tax function  maximizes the information designer's revenue.
		\label{IC}
	\end{theorem}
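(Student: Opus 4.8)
The plan is to treat this as two separate claims, handling DSIC first and revenue optimality second. For DSIC I would argue by direct substitution. Writing the post-tax utility of a type-$i$ user who reports $m$ as $U(i,m) = i\,q(m) - t(m)$ (which follows from $\sum_x i v(x)\mu(x)\sigma(1|x,m) = i\,q(m)$ and the definition of $q$), I plug in the tax function \eqref{tax} and compute the four relevant quantities. The point is that $U(1,1) - U(1,2) = q(2) - q(1)$ and $U(2,2) - U(2,1) = 0$: the first difference is nonnegative exactly by the hypothesis $q(2) \geq q(1)$, so type~1 weakly prefers truth-telling, while type~2 is rendered exactly indifferent between the two reports. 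Since neither difference depends on any other user's message (only on $\mu$, hence on $f(\cdot)$ through the stationary distribution), truth-telling is a weakly dominant strategy and $f(i)=i$, giving DSIC.

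For the revenue claim I would first rewrite the objective. Setting $\delta \mdef t(2) - t(1)$ and using $P_I(1) + P_I(2) = 1$, the per-arrival revenue collapses to $P_I(1)t(1) + P_I(2)t(2) = t(1) + P_I(2)\,\delta$. I then bound each term by a structural constraint: the type-2 incentive constraint $U(2,2)\geq U(2,1)$ forces $\delta \leq 2\big(q(2)-q(1)\big)$, while the type-1 IR constraint \eqref{IR} forces $t(1)\leq q(1)-(\bar{v}-p)^+$. Since the coefficient $P_I(2)$ on $\delta$ is positive, both bounds push the objective upward, so that for \emph{every} DSIC and IR tax function
\begin{align*}
P_I(1)t(1)+P_I(2)t(2)\;\leq\; q(1)-(\bar{v}-p)^+ + 2P_I(2)\big(q(2)-q(1)\big).
\end{align*}
The last step is to verify that \eqref{tax}, with the offset chosen so the type-1 IR binds (i.e.\ $t_0=-(\bar{v}-p)^+$), attains this upper bound with equality, which is immediate since it enforces $\delta = 2(q(2)-q(1))$ and $t(1)=q(1)+t_0$. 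Conceptually this is the standard ``monotone allocation $\Rightarrow$ downward incentive constraint binds, lowest type's IR binds'' pattern: driving $\delta$ to its DSIC ceiling minimizes the information rent left to the high type, and driving $t_0$ to the IR floor extracts the maximal common offset.

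The main obstacle is confirming \emph{feasibility} of this candidate --- specifically that forcing the type-2 IC and the type-1 IR to bind does not violate the type-2 IR. Substituting the candidate gives $U(2,2)=q(1)-t_0=q(1)+(\bar{v}-p)^+$, so the type-2 IR reduces to the single inequality $q(1)+(\bar{v}-p)^+\geq(2\bar{v}-p)^+$. I expect this to be where the real work lies: one must argue, using $q(2)\geq q(1)$ together with $q(i)=\sum_x v(x)\mu(x)\sigma(1|x,i)$ and $\bar{v}=\sum_x v(x)\mu(x)$, that the low type is the binding participation constraint so that the high type's IR is automatically slack, legitimizing the use of \eqref{IR} (rather than the type-2 IR) in the upper bound above. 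I would therefore set the offset to $t_0=\min\{-(\bar{v}-p)^+,\,q(1)-(2\bar{v}-p)^+\}$ and organize the argument around the regime in which the first term is active, verifying that the resulting tax still meets the derived revenue bound.
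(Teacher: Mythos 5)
Your proposal follows essentially the same route as the paper on both halves. For DSIC, the paper writes the misreporting utility in the Myerson-style form $\E(u(i,d=1,m,S))=(i-m)q(m)+\sum_{j=1}^{m-1}q(j)-t_0$ and concludes from monotonicity of $q$; your direct computation ($U(1,1)-U(1,2)=q(2)-q(1)\geq 0$ and $U(2,2)-U(2,1)=0$, with type 2 left exactly indifferent) is precisely that argument specialized to two types. For revenue, the paper uses your two bounds verbatim: for any DSIC tax the type-2 incentive constraint forces $t(2)\leq t(1)+2\left[q(2)-q(1)\right]$, the tax \eqref{tax} attains this with equality, and $t(1)=t_0+q(1)$ is then pushed to the ceiling permitted by IR through the choice of $t_0$.

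The one genuine divergence is your final paragraph, and there you are being \emph{more} careful than the paper, not less. The paper's assertion that ``$t(1)$ is also its maximum possible value'' tacitly assumes the type-1 participation constraint \eqref{ircona} is the one that pins down $t_0$; the type-2 constraint \eqref{irconb} is never checked in the proof (it is simply carried as a constraint in \eqref{opt-lin}). Be warned, however, that the slackness you hope to establish, $q(1)+(\bar{v}-p)^+\geq(2\bar{v}-p)^+$, cannot be derived from $q(2)\geq q(1)$ alone: for example, when $p<\bar{v}$ it reads $q(1)\geq\bar{v}$, which fails for feasible policies $\sigma$ that admit type 1 predominantly at states where $v(x)$ is small or negative. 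In that regime, locking the gap at the DSIC ceiling $2(q(2)-q(1))$ is actually strictly suboptimal --- one can shrink the gap toward $q(2)-q(1)$, raise $t(1)$ up to its own IR or IC limit, and collect strictly more revenue whenever $q(2)>q(1)$ --- so neither your $\min$-formula fallback for $t_0$ nor the paper's tax attains your upper bound there. In short, you have not left a hole that the paper fills; you have found a hole that the paper glosses over. A complete argument must either restrict to the regime where type-1 IR is the binding constraint, or show that the jointly optimal $\sigma$ solving \eqref{opt-lin} necessarily lies in that regime.
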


	\section{Information Design  Problem}\label{infodes}
	Given the tax function described in the previous section and the fact that DSIC holds for  a mechanism with such tax function, and if we assume uniform distribution for the types of users, i.e., $P_I(1)=P_I(2)=\frac{1}{2}$, we can simplify the objective of the information designer as follows.
	\begin{subequations}
		\begin{align}
			\lambda \mathbb{E}[t(I)]=&\frac{\lambda}{2} ( t_0+q(1)+ t_0+2q(2)-q(1))\\
			=&\lambda (t_0+q(2)).
		\end{align}
	\end{subequations}
	Therefore, by including the constraints,  we have the following optimization problem.
	\begin{subequations}
		\begin{align}
			\max&_{\sigma,t_0} \lambda( t_0+q(2))\\
			s.t. & \quad  -t_0 \geq  (\bar{v}-p)^+ \label{ircona}\\
			& \quad  q(1)-t_0 \geq  (2\bar{v}-p)^+ \label{irconb} \\
			& \quad q(2) \geq q(1) \\
			& \quad  \mu(x+1)=\lambda \mu(x)\frac{\sigma(1|x,1)+\sigma(1|x,2)}{2} \label{stationarycon}\\
			& \quad  \sum_{x=0}^{\infty}\mu(x)=1.
		\end{align}
		\label{opt-nonlin}
	\end{subequations}
	Note that constraints \eqref{ircona} and \eqref{irconb} enforce the IR condition.
	The above optimization problem is not linear with respect to $\sigma$ because of the stationary constraints of \eqref{stationarycon}. Furthermore, the constraints are not expressed as linear inequalities. Therefore, we restate the problem in terms of the joint probability distribution on $(S,I,X)$, denoted by $\gamma(S,I,X)$ to have a linear optimization problem. Note that $q(i)=\sum_{x=0}^{\infty}v(x) \mu(x) \sigma(1|x,i)=\sum_{x=0}^{\infty}v(x) \frac{\gamma(s=1,i,x)}{P_I(i)}$.  Therefore, we have the following linear optimization problem faced by the information designer.
	\begin{subequations}
		\begin{align}
			\hspace{-0.5cm}\max&_{\gamma,t_0} \lambda t_0+2\lambda \sum_{x=0}^{\infty}v(x) \gamma(s=1,2,x)\\
			\text{s.t.}   \ &   -t_0  \geq  \sum_{x=0}^{\infty}v(x) \sum_{s,i}\gamma(s,i,x)-p \quad \label{linear_b}\\
			& \  2\sum_{x=0}^{\infty}v(x) \gamma(s=1,1,x)-t_0 \nonumber \\ & \hspace{2.4cm} \geq  2\sum_{x=0}^{\infty}v(x) \sum_{s,i}\gamma(s,i,x)-p \label{linear_c}\\
			& \ -t_0 \geq 0 \label{linear_d}\\
			& \   2\sum_{x=0}^{\infty}v(x) \gamma(s=1,1,x)-t_0 \geq 0  \label{linear_e}\\
			& \  \sum_{x=0}^{\infty}v(x) \gamma(s=1,2,x) \hspace{-0.05cm}\geq \sum_{x=0}^{\infty}v(x) \gamma(s=1,1,x)  \\
			& \   \sum_{s,i}\gamma(s,i,x+1)=\lambda \sum_{i}\gamma(1,i,x), \ \forall x\geq 0\\
			& \ \sum_s\gamma(s,i,x)=\frac{1}{2}\sum_{s,i} \gamma(s,i,x),\ \forall i \in \cI,  x \geq 0 \label{indxi}\\
			& \   \sum_{s,i,x}\gamma(s,i,x)=1\\
			& \   \gamma(s,i,x)\geq 0, \quad  \forall s \in \{0,1\}, \ i \in \cI, \ x \geq 0.
			\label{objconst}
		\end{align}
		\label{opt-lin}
	\end{subequations}
	Note that constraints~\eqref{linear_b}, \eqref{linear_c}, \eqref{linear_d}, \eqref{linear_e} correspond to linearized constraints of the IR condition, while constraint~\eqref{indxi} is to ensure $\P(x,i)=\mu(x)P_I(i)$ according to~\eqref{jointdist}.

	\section{Structural Properties}\label{prop}
	In this section, we discuss some properties and behaviors of the optimal recommendation policy that is the solution of the optimization problem \eqref{opt-lin}.
	\begin{theorem}
		Suppose $t_0$ and $\gamma^*$ (or equivalently $\sigma^*$) are the solution of \eqref{opt-lin}. Then, one of the following holds.
\begin{itemize}
			\item Case 1:
			\begin{itemize}
				\item If $v(x)> 0$ and $\sigma(s=1|1,x)>0$, then $\sigma(s=1|2,x)=1$.
				\item If $v(x)<0$ and $\sigma(s=1|2,x)>0$, then $\sigma(s=1|1,x)=1$.
			\end{itemize}
			\item Case 2:	There is a threshold $\tilde{x}$ such that  for $x\geq \tilde{x}$ we have $\sigma(1|i,x)=0$  for all $i \in \cI$. Furthermore,  for $x<\tilde{x}$,   $\sigma(1|i,x)=1$ for all $i \in \cI$ except for some points in $\tilde{\cX}=\{x_1, x_2, \ldots\}, \ x_k <\tilde{x} $ for which we can have $\sigma(1|1,x_k)<1$, or $\sigma(1|2,x_k)<1$, where  all $x_k \in \tilde{\cX}$ satisfy the following condition. There exists $\epsilon_1>0$ and $\psi$ such that
			\begin{align}
				&(2\sum_{x=0}^{x_k} \lambda^x v(x))\epsilon_1 + (\sum_{x=0}^{x_k} \lambda^x )\psi=\sum_{x=0}^{x_k-1} \lambda^x v(x),   \forall x_k \in \tilde{\cX}. \label{epsilon_psi-p}
			\end{align}
		\end{itemize}
		\label{structure}
	\end{theorem}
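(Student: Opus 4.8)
The plan is to exploit the special coupling structure of the linear program \eqref{opt-lin}. Writing $a_x \mdef \gamma(s=1,1,x)$ and $b_x \mdef \gamma(s=1,2,x)$, and recalling that the marginal constraint \eqref{indxi} forces $\sum_s\gamma(s,i,x)=\mu(x)/2$, the recommendation probabilities are $\sigma(1|i,x)=2\gamma(1,i,x)/\mu(x)$ so that $a_x,b_x\in[0,\mu(x)/2]$. The first observation I would make is that the stationary distribution $\mu$ — and hence $\bar v=\sum_x v(x)\mu(x)$, the normalization, the balance recursion, and the IR constraint \eqref{linear_b} — depends on the decision variables only through the sums $c_x\mdef a_x+b_x$, since Lemma~\ref{stationary} gives $\mu(x+1)=\lambda(a_x+b_x)$. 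The split of $c_x$ between types enters only through $q(1)=2\sum_x v(x)a_x$ and $q(2)=2\sum_x v(x)b_x$, the monotonicity constraint $q(2)\ge q(1)$, the type-$1$ IR constraints \eqref{linear_c} and \eqref{linear_e}, and the objective $\lambda(t_0+q(2))$; moreover $q(1)+q(2)=2\sum_x v(x)c_x$ is pinned down once $\mu$ is fixed.

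Next I would optimize out $t_0$: constraints \eqref{linear_b}--\eqref{linear_e} give $t_0^\ast=\min\{0,\,p-\bar v,\,q(1),\,q(1)+p-2\bar v\}$, so for fixed $\mu$ the reduced objective is a concave piecewise-linear function of the single scalar $q(1)$ (with $q(2)=Q-q(1)$, $Q$ fixed). A short analysis shows this function is flat at its maximum for small $q(1)$ and then decreasing, hence it is maximized by making $q(1)$ as small as the split constraints permit, subject to $q(1)\le Q/2$. Minimizing $q(1)=2\sum_x v(x)a_x$ over the box $a_x\in[\max(0,c_x-\mu(x)/2),\,\min(c_x,\mu(x)/2)]$ is a coordinatewise exchange argument: where $v(x)>0$ one drives $a_x$ down, so $a_x=0$ unless $b_x$ has already saturated at $\mu(x)/2$; where $v(x)<0$ one drives $a_x$ up, so $b_x=0$ unless $a_x$ saturates. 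This is exactly Case~1, and one checks this minimizer automatically satisfies $q(1)\le Q/2$, so it is the branch that occurs when the monotonicity constraint is slack at the optimum.

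The complementary branch is when monotonicity binds, $q(2)=q(1)$. Then $\sum_x v(x)(a_x-b_x)=0$, which (together with the exchange argument) forces symmetric treatment $\sigma(1|1,x)=\sigma(1|2,x)=:\rho(x)$ at all but finitely many levels, and the problem collapses to optimizing a single common policy with $\mu(x+1)=\lambda\rho(x)\mu(x)$. A bang-bang argument in $x$ — raising $\rho$ where $v>0$ and the induced tail of $\mu$ increases the objective, lowering it where $v<0$ — then yields the threshold $\tilde x$ of Case~2. The exceptional fractional levels $x_k$ are precisely those at which a first-order perturbation is objective-neutral, which I would obtain by perturbing the common probability at $x_k$ (reshaping $\mu$ from level $x_k+1$ onward, parametrized by $\epsilon_1$) together with a compensating shift $\psi$ in the tax offset, while renormalizing $\sum_x\mu(x)=1$ and keeping the active IR constraint tight. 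Setting the induced first-order change of $\lambda(t_0+q(2))$ to zero produces the indifference condition \eqref{epsilon_psi-p}, where the partial sums $\sum_{x=0}^{x_k}\lambda^x v(x)$ and $\sum_{x=0}^{x_k}\lambda^x$ arise from propagating the perturbation through the geometric recursion $\mu(x)=\lambda^x\mu(0)$ valid below the threshold.

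The main obstacle will be the infinite-dimensional and degenerate nature of the program. I must argue that an optimum attaining one of these two extremal structures exists: that the threshold $\tilde x$ is finite (which follows because $\lambda>1$ forces $\rho(x)\to 0$ for $\mu$ to remain summable), that only finitely many fractional levels survive (an active-constraint count, since beyond the per-level box and balance equations only the finitely many IR, monotonicity and normalization constraints couple the variables), and that the two cases are exhaustive (the complementary-slackness dichotomy on the monotonicity constraint $q(2)\ge q(1)$). Above all, I must show that the single-level perturbation at $x_k$ can be propagated through the entire balance recursion and simultaneously reconciled with global normalization and the binding IR constraint without destroying feasibility at other levels; this bookkeeping is what makes \eqref{epsilon_psi-p} emerge in the stated closed form and is the technically delicate step.
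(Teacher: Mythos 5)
Your reduction is attractive and partly sound: fixing the sums $c_x=\gamma(1,1,x)+\gamma(1,2,x)$ does decouple $\mu$, $\bar v$ and constraints \eqref{linear_b}--\eqref{linear_e} from the split between types; eliminating $t_0$ gives the reduced objective $\lambda\bigl(\min\{A,\,q(1)+B\}-q(1)+Q\bigr)$ with $A=\min\{0,p-\bar v\}$, $B=\min\{0,p-2\bar v\}$; and in the regime where this is strictly decreasing at the optimum, your coordinatewise exchange argument does force exactly the Case~1 pattern. This is a genuinely different (primal) route from the paper, which works entirely through KKT multipliers. However, your case split is misidentified, and this is a real gap, not a presentation issue. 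Case~1 versus Case~2 is not governed by whether $q(2)\geq q(1)$ is slack or binding. The dichotomy your own computation exposes is: either the optimum lies where the reduced objective is strictly decreasing in $q(1)$ (then $q(1)$-minimization is necessary and Case~1 follows), or it lies in the flat piece, i.e.\ where \eqref{linear_c} binds, $t_0=q(1)+p-2\bar v$, and the objective equals $\lambda(q(1)+q(2)+p-2\bar v)$, which depends on the split only through the sum $Q$. That flat regime is what corresponds to the paper's Case~2 (in dual terms, $\eta=\epsilon_2=\epsilon_4=0$ and $\epsilon_1+\epsilon_3=1$, where $\epsilon_1$ is the multiplier of \eqref{linear_c}); in it, optimal solutions generally have $q(2)>q(1)$ \emph{strictly}, because the only way the designer profits from discrimination is by treating the two types differently at the exceptional states $x_k\in\tilde\cX$ --- the paper itself remarks that treating them identically everywhere below the threshold makes the objective collapse. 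So ``monotonicity binds'' neither characterizes Case~2 nor is implied by it, and routing every slack-monotonicity solution into Case~1 is wrong: in the flat regime a solution can have $q(1)<q(2)$, violate the Case~1 implications at many $x$, and still be optimal; the theorem covers such solutions only through condition \eqref{epsilon_psi-p}, which your branch structure can never reach. (Relatedly, your first branch proves the structure only for the particular $q(1)$-minimizing split you construct, while the theorem asserts it for any optimal solution.)

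The second gap is that Case~2 is not actually proved. From the single scalar equation $q(1)=q(2)$ one cannot conclude pointwise symmetry $\sigma(1|1,x)=\sigma(1|2,x)$ at all but finitely many $x$; a bang-bang argument, even granted, yields some $0/1$ pattern rather than a threshold in $x$ (the paper obtains the threshold from the balance recursion: once $\gamma(1,i,\tilde x)=0$ for both $i$, $\mu$ and hence all mass vanishes above $\tilde x$); and, most importantly, \eqref{epsilon_psi-p} is asserted rather than derived. In the paper, $\epsilon_1$ and $\psi$ are the KKT multipliers of \eqref{linear_c} and of the normalization constraint, and the equation comes from solving the dual stationarity recursion $(2-2\epsilon_1)v(x)=\alpha_{x-1}-\lambda\alpha_x+\psi$ valid below the threshold, combining it with the extra relation $2\epsilon_1 v(x_k)=-\alpha_{x_k-1}-\psi$ that holds only at fractional states, and eliminating the $\alpha$'s by multiplying by $\lambda^x$ and telescoping; that is where the coefficients $\sum_{x=0}^{x_k}\lambda^x v(x)$, $\sum_{x=0}^{x_k}\lambda^x$ and $\sum_{x=0}^{x_k-1}\lambda^x v(x)$ come from. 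Your plan reinterprets $\epsilon_1$ and $\psi$ as primal perturbation parameters (``reshaping $\mu$'', ``a compensating shift in the tax offset''), but no perturbation is actually propagated simultaneously through the balance recursion, the normalization, and the binding IR constraint; doing that bookkeeping correctly amounts to reconstructing the complementary-slackness system you set out to avoid. Since this indifference condition is the entire content of Case~2, the proposal as written does not establish the theorem.
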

	
	The intuitive explanation of the above theorem is as follows.
	If we define $x_0$ as $v(x)>0$ for $x< x_0$ and $v(x)<0$ for $x > x_0$, then case 1 of the theorem implies a threshold behavior for the recommendation policy with the threshold being $x_0$. That is, for $x$ below the threshold, the recommendation policy favors type 2 of the users and only allows type 1 to enter the queue if type 2 is allowed in with probability 1.  Similarly, for $x$ above the threshold, type 1 is favored for entering the queue and type 2 is allowed to enter the queue if type 1 is allowed in with probability 1.
	
	Case 2 of the theorem implies that the designer is sending the same signal (except for $x\in \tilde{\cX}$) for both types of  users. Therefore, revenue due to the discrimination between the two user types can only be gained for states  $x\in \tilde{\cX}$. One question that arises is what the size $|\tilde{\mathcal{X}}|$ of this set  is.
	Equation~\eqref{epsilon_psi-p} indicates that for a given size $|\tilde{\mathcal{X}}|$,  there are $|\tilde{\mathcal{X}}|$ equations to be satisfied and only two unknowns ($\epsilon_1$ and $\psi$). As a result, it is highly unlikely that for a general utility function $v(\cdot)$, the size of the set is larger than 2.
	Evaluating the quantities $x_1$ and $x_2$ can be done systematically by first evaluating $\tilde{x}$ and then searching over all
	$O(\tilde{x}^2)$ cases for the values of $x_1$ and $x_2$ by checking if~\eqref{epsilon_psi-p} is satisfied for some $\epsilon_1>0$ and $\psi$.

	\section{Numerical Analysis}\label{numerical}
	In this section, we present some numerical analysis of the model discussed in this paper. We
	have numerically solved the linear optimization problem \eqref{opt-lin} using Matlab. In our analysis, we have set a maximum capacity for the queue such that it does not affect the stationary distribution of the queue backlog.
	We consider $\lambda=1.2$, and $v(x)=1-(x/50)^2$ as the utility function of the users. Fig.~\ref{signaling0} depicts the recommendation policy of the information designer with respect to the queue backlog for  $p=0$, while  the stationary distribution of the queue backlog is plotted in the same figure. In this case, the revenue of the designer  is $0.0786$. The recommendation policy in Fig.~\ref{signaling0} confirms the results of Theorem~\ref{structure}. The threshold $x_0$ in this example is $x_0=50$ and we can see for $x<x_0$, both types are allowed in with probability 1, and for $x > x_0$, either both types are allowed in with probability 1 (states 51-53) or only type 1 is allowed in. This is consistent with case~1 of Theorem~\ref{structure}.
	In Fig.~\ref{signaling02} we plot similar quantities for the case of $p=0.2$ and the revenue of the information designer is $0.2693$. Similar to the $p=0$ case, the results are consistent with case~1 of Theorem~\ref{structure} because for $x<x_0$, type 2 is allowed in with probability 1, and for $x>x_0$ type 1 is allowed in with probability 1.
	
	In order to evaluate how well the information designer is doing in terms of gaining revenue,  we can calculate the revenue of the queue when there is no information designer and the incoming traffic chooses to join the queue without any information. For  the given $v(\cdot)$, only one type of users will join the queue (type~2) and the rate would become $\frac{\lambda}{2}$. Otherwise, the queue becomes unstable. Therefore, the revenue  is at most $\frac{ \lambda p}{2}$. Hence,  for $p=0$, the revenue of the outside option is $0$ and for $p=0.2$, the revenue of the outside option is $0.12$ and clearly, the information designer is doing better than the outside option.

	\begin{figure}
		\centering
		\begin{tikzpicture}
		\definecolor{mycolor1}{rgb}{0.15,0.15,0.15}
		\definecolor{mycolor2}{rgb}{0,0.447,0.741}
		\definecolor{mycolor3}{rgb}{0.85,0.325,0.098}
		\definecolor{mycolor4}{rgb}{0.929,0.694,0.125}

		\begin{axis}[
		scale only axis,
		every outer x axis line/.append style={mycolor1},
		every x tick label/.append style={font=\color{mycolor1}},
		every outer y axis line/.append style={mycolor1},
		xlabel={$x$},
		every y tick label/.append style={font=\color{mycolor1}},
		width=2.7in,
		height=1.6in,
		xmin=0, xmax=120,
		ymin=0, ymax=1.2,
		axis on top]
		
		\addplot [
		color=mycolor2,
		solid,
		mark=asterisk,
		mark options={solid}
		]
		coordinates{
			(1,1)
			(2,1)
			(3,1)
			(4,1)
			(5,1)
			(6,1)
			(7,1)
			(8,1)
			(9,1)
			(10,1)
			(11,1)
			(12,1)
			(13,1)
			(14,1)
			(15,1)
			(16,1)
			(17,1)
			(18,1)
			(19,1)
			(20,1)
			(21,1)
			(22,1)
			(23,1)
			(24,1)
			(25,1)
			(26,1)
			(27,1)
			(28,1)
			(29,1)
			(30,1)
			(31,1)
			(32,1)
			(33,1)
			(34,1)
			(35,1)
			(36,1)
			(37,1)
			(38,1)
			(39,1)
			(40,1)
			(41,1)
			(42,1)
			(43,1)
			(44,1)
			(45,1)
			(46,1)
			(47,1)
			(48,1)
			(49,1)
			(50,1)
			(51,1)
			(52,1)
			(53,1)
			(54,1)
			(55,1)
			(56,1)
			(57,1)
			(58,1)
			(59,1)
			(60,1)
			(61,1)
			(62,1)
			(63,1)
			(64,1)
			(65,1)
			(66,1)
			(67,1)
			(68,1)
			(69,1)
			(70,1)
			(71,1)
			(72,1)
			(73,1)
			(74,1)
			(75,1)
			(76,1)
			(77,0)
			
		};
		
		\addplot [
		color=mycolor3,
		solid,
		mark=+,
		mark options={solid}
		]
		coordinates{
			(1,1)
			(2,1)
			(3,1)
			(4,1)
			(5,1)
			(6,1)
			(7,1)
			(8,1)
			(9,1)
			(10,1)
			(11,1)
			(12,1)
			(13,1)
			(14,1)
			(15,1)
			(16,1)
			(17,1)
			(18,1)
			(19,1)
			(20,1)
			(21,1)
			(22,1)
			(23,1)
			(24,1)
			(25,1)
			(26,1)
			(27,1)
			(28,1)
			(29,1)
			(30,1)
			(31,1)
			(32,1)
			(33,1)
			(34,1)
			(35,1)
			(36,1)
			(37,1)
			(38,1)
			(39,1)
			(40,1)
			(41,1)
			(42,1)
			(43,1)
			(44,1)
			(45,1)
			(46,1)
			(47,1)
			(48,1)
			(49,1)
			(50,1)
			(51,1)
			(52,1)
			(53,1)
			(54,0.136193)
			(55,0)
			
		};
		
		\addplot [
		color=mycolor4,
		solid,
		mark=x,
		mark options={solid}
		]
		coordinates{
			(1,8.25431e-06)
			(2,9.90517e-06)
			(3,1.18862e-05)
			(4,1.42634e-05)
			(5,1.71161e-05)
			(6,2.05394e-05)
			(7,2.46472e-05)
			(8,2.95767e-05)
			(9,3.5492e-05)
			(10,4.25904e-05)
			(11,5.11085e-05)
			(12,6.13302e-05)
			(13,7.35962e-05)
			(14,8.83155e-05)
			(15,0.000105979)
			(16,0.000127174)
			(17,0.000152609)
			(18,0.000183131)
			(19,0.000219757)
			(20,0.000263709)
			(21,0.00031645)
			(22,0.00037974)
			(23,0.000455688)
			(24,0.000546826)
			(25,0.000656191)
			(26,0.00078743)
			(27,0.000944916)
			(28,0.0011339)
			(29,0.00136068)
			(30,0.00163281)
			(31,0.00195938)
			(32,0.00235125)
			(33,0.0028215)
			(34,0.0033858)
			(35,0.00406296)
			(36,0.00487556)
			(37,0.00585067)
			(38,0.0070208)
			(39,0.00842496)
			(40,0.01011)
			(41,0.0121319)
			(42,0.0145583)
			(43,0.01747)
			(44,0.020964)
			(45,0.0251568)
			(46,0.0301882)
			(47,0.0362258)
			(48,0.043471)
			(49,0.0521651)
			(50,0.0625982)
			(51,0.0751178)
			(52,0.0901414)
			(53,0.10817)
			(54,0.129804)
			(55,0.0884891)
			(56,0.0530935)
			(57,0.0318561)
			(58,0.0191136)
			(59,0.0114682)
			(60,0.00688091)
			(61,0.00412855)
			(62,0.00247713)
			(63,0.00148628)
			(64,0.000891766)
			(65,0.00053506)
			(66,0.000321036)
			(67,0.000192622)
			(68,0.000115573)
			(69,6.93438e-05)
			(70,4.16063e-05)
			(71,2.49638e-05)
			(72,1.49783e-05)
			(73,8.98695e-06)
			(74,5.39217e-06)
			(75,3.2353e-06)
			(76,1.94118e-06)
			(77,1.76804e-22)
			(78,0)
			
		};
		
		\addlegendimage{/pgfplots/refstyle=Lit}\addlegendentry{\tiny{$\sigma(1|x,1)$}}
		\addlegendimage{/pgfplots/refstyle=Lit}\addlegendentry{\tiny{$\sigma(1|x,2)$}}
		\addlegendimage{/pgfplots/refstyle=Lit}\addlegendentry{\tiny{$\mu(x)$}}
		\end{axis}
		
		\end{tikzpicture}
		\caption{Recommendation policy and the stationary distribution of the queue backlog for $p=0$.}
		\label{signaling0}
	\end{figure}
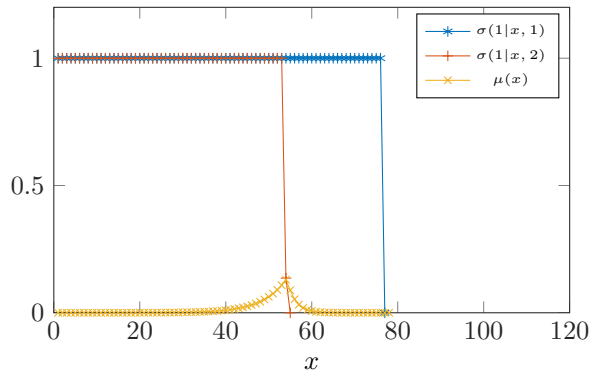

	\begin{figure}
		\centering
		\begin{tikzpicture}
		
		\definecolor{mycolor1}{rgb}{0.15,0.15,0.15}
		\definecolor{mycolor2}{rgb}{0,0.447,0.741}
		\definecolor{mycolor3}{rgb}{0.85,0.325,0.098}
		\definecolor{mycolor4}{rgb}{0.929,0.694,0.125}
		
		\begin{axis}[
		scale only axis,
		every outer x axis line/.append style={mycolor1},
		every x tick label/.append style={font=\color{mycolor1}},
		every outer y axis line/.append style={mycolor1},
		every y tick label/.append style={font=\color{mycolor1}},
		xlabel={$x$},
		width=2.7in,
		height=1.6in,
		xmin=0, xmax=180,
		ymin=0, ymax=1.2,
		axis on top]
		
		\addplot [
		color=mycolor2,
		solid,
		mark=asterisk,
		mark options={solid}
		]
		coordinates{
			(1,1)
			(2,1)
			(3,1)
			(4,1)
			(5,1)
			(6,1)
			(7,1)
			(8,1)
			(9,1)
			(10,1)
			(11,1)
			(12,1)
			(13,1)
			(14,1)
			(15,1)
			(16,1)
			(17,1)
			(18,1)
			(19,1)
			(20,1)
			(21,1)
			(22,1)
			(23,1)
			(24,1)
			(25,1)
			(26,1)
			(27,1)
			(28,1)
			(29,1)
			(30,1)
			(31,1)
			(32,1)
			(33,1)
			(34,1)
			(35,1)
			(36,0)
			(37,0)
			(38,0)
			(39,0)
			(40,0)
			(41,0)
			(42,0)
			(43,0)
			(44,0)
			(45,0)
			(46,0.976527)
			(47,1)
			(48,1)
			(49,1)
			(50,1)
			(51,1)
			(52,1)
			(53,1)
			(54,1)
			(55,1)
			(56,1)
			(57,1)
			(58,1)
			(59,1)
			(60,1)
			(61,1)
			(62,1)
			(63,1)
			(64,1)
			(65,1)
			(66,1)
			(67,1)
			(68,1)
			(69,1)
			(70,1)
			(71,1)
			(72,1)
			(73,1)
			(74,1)
			(75,1)
			(76,1)
			(77,1)
			(78,1)
			(79,1)
			(80,1)
			(81,1)
			(82,1)
			(83,1)
			(84,1)
			(85,1)
			(86,1)
			(87,1)
			(88,1)
			(89,1)
			(90,1)
			(91,1)
			(92,1)
			(93,1)
			(94,1)
			(95,1)
			(96,1)
			(97,1)
			(98,1)
			(99,1)
			(100,1)
			(101,1)
			(102,1)
			(103,1)
			(104,1)
			(105,1)
			(106,1)
			(107,1)
			(108,1)
			(109,1)
			(110,1)
			(111,1)
			(112,1)
			(113,1)
			(114,1)
			(115,1)
			(116,1)
			(117,1)
			(118,1)
			(119,1)
			(120,1)
			(121,0)
		};

		\addplot [
		color=mycolor3,
		solid,
		mark=+,
		mark options={solid}
		]
		coordinates{
			(1,1)
			(2,1)
			(3,1)
			(4,1)
			(5,1)
			(6,1)
			(7,1)
			(8,1)
			(9,1)
			(10,1)
			(11,1)
			(12,1)
			(13,1)
			(14,1)
			(15,1)
			(16,1)
			(17,1)
			(18,1)
			(19,1)
			(20,1)
			(21,1)
			(22,1)
			(23,1)
			(24,1)
			(25,1)
			(26,1)
			(27,1)
			(28,1)
			(29,1)
			(30,1)
			(31,1)
			(32,1)
			(33,1)
			(34,1)
			(35,1)
			(36,1)
			(37,1)
			(38,1)
			(39,1)
			(40,1)
			(41,1)
			(42,1)
			(43,1)
			(44,1)
			(45,1)
			(46,1)
			(47,1)
			(48,1)
			(49,1)
			(50,1)
			(51,1)
			(52,1)
			(53,1)
			(54,1)
			(55,1)
			(56,1)
			(57,1)
			(58,1)
			(59,1)
			(60,1)
			(61,1)
			(62,1)
			(63,1)
			(64,1)
			(65,1)
			(66,1)
			(67,1)
			(68,1)
			(69,1)
			(70,0.85693)
			(71,0)
			
		};
		
		\addplot [
		color=mycolor4,
		solid,
		mark=x,
		mark options={solid}
		]
		coordinates{
			(1,0.000145696)
			(2,0.000174835)
			(3,0.000209802)
			(4,0.000251763)
			(5,0.000302115)
			(6,0.000362538)
			(7,0.000435046)
			(8,0.000522055)
			(9,0.000626466)
			(10,0.000751759)
			(11,0.000902111)
			(12,0.00108253)
			(13,0.00129904)
			(14,0.00155885)
			(15,0.00187062)
			(16,0.00224474)
			(17,0.00269369)
			(18,0.00323243)
			(19,0.00387891)
			(20,0.00465469)
			(21,0.00558563)
			(22,0.00670276)
			(23,0.00804331)
			(24,0.00965197)
			(25,0.0115824)
			(26,0.0138988)
			(27,0.0166786)
			(28,0.0200143)
			(29,0.0240172)
			(30,0.0288206)
			(31,0.0345848)
			(32,0.0415017)
			(33,0.0498021)
			(34,0.0597625)
			(35,0.071715)
			(36,0.086058)
			(37,0.0516348)
			(38,0.0309809)
			(39,0.0185885)
			(40,0.0111531)
			(41,0.00669187)
			(42,0.00401512)
			(43,0.00240907)
			(44,0.00144544)
			(45,0.000867266)
			(46,0.00052036)
			(47,0.000617103)
			(48,0.000740524)
			(49,0.000888628)
			(50,0.00106635)
			(51,0.00127962)
			(52,0.00153555)
			(53,0.00184266)
			(54,0.00221119)
			(55,0.00265343)
			(56,0.00318412)
			(57,0.00382094)
			(58,0.00458513)
			(59,0.00550215)
			(60,0.00660258)
			(61,0.0079231)
			(62,0.00950772)
			(63,0.0114093)
			(64,0.0136911)
			(65,0.0164293)
			(66,0.0197152)
			(67,0.0236583)
			(68,0.0283899)
			(69,0.0340679)
			(70,0.0408815)
			(71,0.0455484)
			(72,0.027329)
			(73,0.0163974)
			(74,0.00983845)
			(75,0.00590307)
			(76,0.00354184)
			(77,0.00212511)
			(78,0.00127506)
			(79,0.000765038)
			(80,0.000459023)
			(81,0.000275414)
			(82,0.000165248)
			(83,9.9149e-05)
			(84,5.94894e-05)
			(85,3.56936e-05)
			(86,2.14162e-05)
			(87,1.28497e-05)
			(88,7.70982e-06)
			(89,4.62589e-06)
			(90,2.77554e-06)
			(91,1.66532e-06)
			(92,9.99193e-07)
			(93,5.99516e-07)
			(94,3.59709e-07)
			(95,2.15826e-07)
			(96,1.29495e-07)
			(97,7.76972e-08)
			(98,4.66183e-08)
			(99,2.7971e-08)
			(100,1.67826e-08)
			(101,1.00696e-08)
			(102,6.04174e-09)
			(103,3.62504e-09)
			(104,2.17503e-09)
			(105,1.30502e-09)
			(106,7.83009e-10)
			(107,4.69806e-10)
			(108,2.81883e-10)
			(109,1.6913e-10)
			(110,1.01478e-10)
			(111,6.08868e-11)
			(112,3.65321e-11)
			(113,2.19192e-11)
			(114,1.31515e-11)
			(115,7.89093e-12)
			(116,4.73456e-12)
			(117,2.84073e-12)
			(118,1.70444e-12)
			(119,1.02266e-12)
			(120,6.13599e-13)
			(121,3.68159e-13)
			(122,0)
			
		};
		
		\addlegendimage{/pgfplots/refstyle=Lit}\addlegendentry{\tiny{$\sigma(1|x,1)$}}
		\addlegendimage{/pgfplots/refstyle=Lit}\addlegendentry{\tiny{$\sigma(1|x,2)$}}
		\addlegendimage{/pgfplots/refstyle=Lit}\addlegendentry{\tiny{$\mu(x)$}}
		\end{axis}
		
		\end{tikzpicture}
		\caption{Recommendation policy and the stationary distribution of the queue backlog for $p=0.2$.}
		\label{signaling02}
	\end{figure}
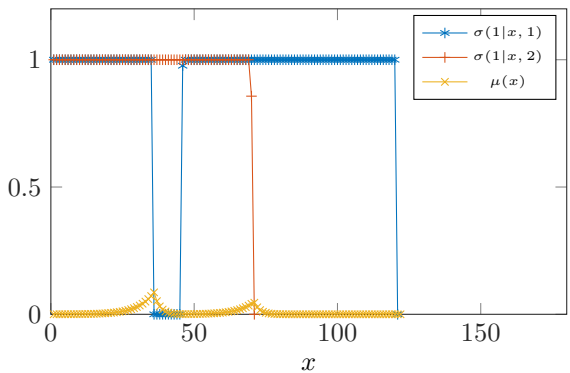

	\section{Conclusion} \label{conc}
	In this paper, we studied an information design problem for a queuing system where in addition to the information designer that privately observes the queue backlog,  the users also have payoff relevant private types. Therefore, a joint information and mechanism design problem was studied. We investigated how the information designer can design tax functions and provide different information for different types of users in order to gain the most revenue. Some structural results were provided for the optimal recommendation policy of the information designer and numerical analysis was done to support the results.

	\optv{arxiv}{
	\appendix
	
	\subsection{Proof of Lemma \ref{stationary}}
		We have a continuous time M/M/1 queue with transition rates $g_{x,x+1}=\lambda \P(S=1|x)$
	and $g_{x,x-1}=1$ for $x>0$. Therefore, we have $g_{x,x}=-1-\lambda \P(S=1|x)$ for $x>0$ and $g_{0,0}=-\lambda \P(S=1|0)$. We can calculate the stationary distribution of the queue as follows.
		\begin{subequations}
			\begin{align}
				&\sum_{j}\mu(j)g_{j,x}=0, \quad \forall x\geq 0,
			\end{align}
			which can be written explicitly as
			\begin{align}
				\mu(1) &=\lambda \P(S=1|0) \mu(0)\\
				\mu(x+1)&+\lambda \P(S=1|x-1) \mu(x-1) \nonumber \\
				&=(1+\lambda \P(S=1|x))\mu(x), \quad x\geq 1,
			\end{align}
			and after substituting recursively we find
			\begin{align}
				&\mu(x+1) =\lambda \P(S=1|x) \mu(x)= \\
				&\lambda \mu(x) (P_I(1)\sigma(1|x,1)+P_I(2)\sigma(1|x,2)), \ \forall x\geq 0.
			\end{align}
		\end{subequations}

	\subsection{Proof of Theorem \ref{IC}}
		This theorem can be proved using Myerson's Lemma \cite{myerson1981optimal}. In order to see the connection, note that given the tax function described in equation \eqref{tax}, one can write the following  for $\E(u(i,d=1,m,S))$.
		\begin{align}
			\E(u(i,d=1,m,S))\hspace{-0.05cm}
		=(i-m)q(m)+	\hspace{-0.05cm}\sum_{j=1}^{m-1}	\hspace{-0.05cm}q(j)\hspace{-0.05cm}-\hspace{-0.05cm}t_0,
			\label{util}
		\end{align}
		where $m=f(i)$ and	if $m=f(i)=i$, we have
		\begin{align}
			\E(u(i,d=1,m=i,S))&=\sum_{j=1}^{i-1}q(j)-t_0.
		\end{align}
		By comparing the above utility, which is gained by a user if she acts truthfully, with the one in \eqref{util} for any $m\neq i$, we can see that if $q(i)$ is increasing in $i$, then DSIC holds.

		In order to prove the second part of the theorem, note that because of the discrete type space, we do not have uniqueness for the tax functions. In other words, revenue equivalence theorem does not hold. However, we can show that the tax function defined in equation \eqref{tax} is an upper bound on all of the tax functions satisfying DSIC and therefore, it is the best the designer can do in terms of maximizing his revenue.
		
		One can write the following for any tax function satisfying DSIC.
		\begin{align}
			t(m+1)-t(m) &\geq m[q(m+1)-q(m)] \\
			t(m+1)-t(m) &\leq (m+1) [q(m+1)-q(m)].
		\end{align}
		Therefore, we have $	t(2) \leq t(1)+ 2 [q(2)-q(1)]$.
		We can see that if tax functions are defined according to equation \eqref{tax}, we have $t(2) =t(1)+ 2 [q(2)-q(1)]$. That is, given $t(1)$, the upper bound is reached for $t(2)$. On the other hand, according to IR constraint, we must have $t(1)\leq q(1)$ and according to  \eqref{tax}, we have $t(1)=t_0+q(1)$, where $t_0$ is optimized over subject to IR constraint. Therefore, $t(1)$ is also its maximum possible value.
		Therefore, the designer can not gain any more revenue using other forms of the tax function.
	
	\subsection{Proof of Theorem \ref{structure}}
		Since optimization problem \eqref{opt-lin} is linear in $\gamma$, we can use KKT conditions to characterize the solution. We use the following dual variables for each constraints in \eqref{opt-lin}.
		\begin{align}
			&   -2\sum_{x=0}^{\infty}v(x) \gamma(s=1,1,x)+ 2\sum_{x=0}^{\infty}v(x) \sum_{s,i}\gamma(s,i,x)\nonumber \\& \hspace{5cm}-p+t_0 \leq 0  :\ \epsilon_1  \\
			&\sum_{x=0}^{\infty}v(x) \sum_{s,i}\gamma(s,i,x)-p+t_0 \leq 0  :\ \epsilon_2 \\
			& -2\sum_{x=0}^{\infty}\hspace{-0.05cm}v(x) \gamma(s=1,1,x)+ t_0 \leq 0  :\  \epsilon_3   \\
			&  t_0\leq 0  : \ \epsilon_4  \\
			&  \sum_{x=0}^{\infty}\hspace{-0.1cm} v(x) \gamma(s=1,1,x) \hspace{-0.05cm} - \hspace{-0.05cm}\sum_{x=0}^{\infty} \hspace{-0.05cm} v(x) \gamma(s=1,2,x) \hspace{-0.05cm} \leq \hspace{-0.05cm} 0 : \eta  \\
			&  \sum_{s,i}\gamma(s,i,x+1)\hspace{-0.05cm} -\hspace{-0.05cm} \lambda \sum_{i}\gamma(1,i,x)=0, \ \forall x\geq 0 : \alpha_x
			\\
			&   2\sum_{s}\hspace{-0.05cm} \gamma(s,i,x)\hspace{-0.05cm} -\hspace{-0.05cm}  \sum_{s,i}\gamma(s,i,x)=0, \forall i \in \cI,   x\geq 0 :     \nu^i_x  \\
			&  \sum_{s,i,x}\gamma(s,i,x)-1=0 :\ \psi \\
			&   -\gamma(s,i,x)\leq 0,\  \forall s \in \{0,1\},  i \in \cI,  x \geq 0: \  \beta^i_{s,x}
			\label{objconst-2}
		\end{align}
		where $\epsilon_1\geq 0$, $\epsilon_2\geq 0$, $\epsilon_3\geq 0$, $\epsilon_4\geq 0$, $\eta\geq 0$ and $\beta^i_{s,x}\geq 0$.
		By taking the derivative of the dual function with respect to $\gamma(1,1,x)$, $\gamma(1,2,x)$,  $\gamma(0,1,x)$, and $\gamma(0,2,x)$ for $x>0$, and also with respect to $t_0$ and setting them to zero, we have the following.
		\begin{subequations}
			\begin{align}
				&(\epsilon_2-2\epsilon_3+\eta)v(x)+\alpha_{x-1}-\lambda \alpha_x+\nu^1_x-\nu^2_x+\psi-\beta^1_{1,x}\nonumber \\& \hspace{6.5cm}=0\\
				&(-2+2\epsilon_1+\epsilon_2-\eta)v(x)+\alpha_{x-1}-\lambda \alpha_x+\nu^2_x-\nu^1_x\nonumber \\&\hspace{5cm} +\psi-\beta^2_{1,x}=0\\
				&(2\epsilon_1+\epsilon_2)v(x)+\alpha_{x-1}+\nu^1_x-\nu^2_x +\psi-\beta^1_{0,x}=0\\
				&(2\epsilon_1+\epsilon_2)v(x)+\alpha_{x-1}+\nu^2_x-\nu^1_x +\psi-\beta^2_{0,x}=0\\
				&-1+\epsilon_1+\epsilon_2+\epsilon_3+\epsilon_4=0. \label{epsilon}
			\end{align}
		\end{subequations}
		Therefore, we have
		\begin{subequations}
			\begin{align}
				&v(x)=\frac{-\lambda \alpha_x+\alpha_{x-1}+\nu^1_x-\nu^2_x+\psi-\beta^1_{1,x}}{2\epsilon_3-\epsilon_2-\eta} \label{gama11}\\
				&=\frac{-\lambda \alpha_x+\alpha_{x-1}+\nu^2_x-\nu^1_x +\psi-\beta^2_{1,x}}{2-2\epsilon_1-\epsilon_2+\eta}\label{gama12} \\
				&=\frac{-\alpha_{x-1}-\nu^1_x+\nu^2_x -\psi+\beta^1_{0,x}}{2\epsilon_1+\epsilon_2}\label{gama01}\\
				&=\frac{-\alpha_{x-1}-\nu^2_x+\nu^1_x -\psi+\beta^2_{0,x}}{2\epsilon_1+\epsilon_2}.\label{gama02}
			\end{align}
			\label{gama}
		\end{subequations}
		Based on the above equations, we can have the following lemma.
		\begin{lemma}
			\label{equal}
			If there exists a $\tilde{x}>0$ for which we have $\gamma(0,1,\tilde{x})>0$, $\gamma(0,2,\tilde{x})>0$, $\gamma(1,1,\tilde{x})>0$, and $\gamma(1,2,\tilde{x})>0$, then $\eta=0$, $\epsilon_2=\epsilon_4=0$, and $\nu^1_x=\nu^2_x$ for all $x>0$. Further, we have $\beta^1_{0,x}=\beta^2_{0,x}$ and  $\beta^1_{1,x}=\beta^2_{1,x}$.
		\end{lemma}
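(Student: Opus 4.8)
The plan is to read off all the stated equalities from the KKT stationarity conditions \eqref{gama11}--\eqref{gama02} together with \eqref{epsilon}, coupled with complementary slackness for the nonnegativity constraints $-\gamma(s,i,x)\le 0$ (multipliers $\beta^i_{s,x}$). The argument naturally splits into a local part at $x=\tilde{x}$, which pins down the scalar multipliers $\eta,\epsilon_2,\epsilon_4$, and a global part that propagates $\nu^1_x=\nu^2_x$ to every $x$.

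First I would localize at $x=\tilde{x}$. Since $\gamma(s,i,\tilde{x})>0$ for all $s\in\{0,1\}$ and $i\in\cI$, complementary slackness forces $\beta^1_{0,\tilde{x}}=\beta^2_{0,\tilde{x}}=\beta^1_{1,\tilde{x}}=\beta^2_{1,\tilde{x}}=0$. Substituting these zeros into the two stationarity conditions associated with $\gamma(0,1,\tilde{x})$ and $\gamma(0,2,\tilde{x})$ and subtracting, the $v(\tilde{x})$, $\alpha_{\tilde{x}-1}$ and $\psi$ terms cancel and I am left with $2(\nu^1_{\tilde{x}}-\nu^2_{\tilde{x}})=0$, hence $\nu^1_{\tilde{x}}=\nu^2_{\tilde{x}}$. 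Feeding this equality and the vanishing $\beta$'s into the two $s=1$ conditions (those for $\gamma(1,1,\tilde{x})$ and $\gamma(1,2,\tilde{x})$) and subtracting again, the $\alpha$ and $\psi$ terms cancel and the coefficient of $v(\tilde{x})$ collapses, using \eqref{epsilon} (namely $\epsilon_1+\epsilon_2+\epsilon_3+\epsilon_4=1$), to $2(\epsilon_2+\epsilon_4+\eta)v(\tilde{x})=0$. In the generic case $v(\tilde{x})\neq 0$ (recall $v$ is strictly decreasing, so it vanishes at most at one state), and since $\epsilon_2,\epsilon_4,\eta\ge 0$, every summand must vanish: $\epsilon_2=\epsilon_4=\eta=0$. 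This gives the scalar part of the claim.

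Finally, knowing that $\epsilon_2=\eta=0$ hold globally, I would return to the stationarity conditions at an arbitrary $x>0$. Subtracting the $i=2$ condition from the $i=1$ condition within each of the pairs $\{s=0\}$ and $\{s=1\}$ yields $\beta^1_{0,x}-\beta^2_{0,x}=\beta^1_{1,x}-\beta^2_{1,x}=2(\nu^1_x-\nu^2_x)$. To force the right-hand side to zero I would combine a sign argument with \eqref{indxi}, which gives $\gamma(0,i,x)+\gamma(1,i,x)=\frac{1}{2}\mu(x)$ for each $i$: if $\nu^1_x-\nu^2_x>0$ then $\beta^1_{0,x},\beta^1_{1,x}>0$, so complementary slackness kills $\gamma(0,1,x)=\gamma(1,1,x)=0$ and forces $\mu(x)=0$, while the case $\nu^1_x-\nu^2_x<0$ is symmetric in the type index. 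Thus for every recurrent state ($\mu(x)>0$) we obtain $\nu^1_x=\nu^2_x$, and for transient states ($\mu(x)=0$, where all $\gamma(\cdot,\cdot,x)=0$) complementary slackness is vacuous and one simply selects the dual solution with $\nu^1_x=\nu^2_x$. In either case $\beta^1_{0,x}=\beta^2_{0,x}$ and $\beta^1_{1,x}=\beta^2_{1,x}$ follow at once.

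The step I expect to be the main obstacle is precisely this extension from the single state $\tilde{x}$ to all $x$. The two copies of constraint \eqref{indxi} (for $i=1$ and $i=2$) are negatives of one another, so the pair $(\nu^1_x,\nu^2_x)$ is not individually determined and only the difference $\nu^1_x-\nu^2_x$ is meaningful; proving that this difference vanishes everywhere is exactly what requires the complementary-slackness sign argument at recurrent states and a separate gauge choice at transient states. The only minor technical caveat is the genericity assumption $v(\tilde{x})\neq 0$ used to conclude $\epsilon_2=\epsilon_4=\eta=0$.
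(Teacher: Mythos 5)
Your proof is correct and takes essentially the same route as the paper's: complementary slackness at $\tilde{x}$ annihilates the four multipliers $\beta^i_{s,\tilde{x}}$, differencing the $s=0$ stationarity pair gives $\nu^1_{\tilde{x}}=\nu^2_{\tilde{x}}$, differencing the $s=1$ pair together with the normalization $\epsilon_1+\epsilon_2+\epsilon_3+\epsilon_4=1$ forces $\epsilon_2=\epsilon_4=\eta=0$, and the identity $\beta^1_{0,x}-\beta^2_{0,x}=\beta^1_{1,x}-\beta^2_{1,x}=2(\nu^1_x-\nu^2_x)$ plus a sign/complementary-slackness argument propagates the equality to all $x$. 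If anything, you are more careful than the paper on two points it glosses over: the genericity caveat $v(\tilde{x})\neq 0$ (which the paper's ``equal denominators'' step implicitly needs as well), and the transient states with $\mu(x)=0$, where the paper's assertion that each type has some $\gamma(s,i,x)>0$ fails and your gauge choice of the (non-unique) dual variables is the right resolution.
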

		\begin{proof}
			Looking at equations \eqref{gama01} and \eqref{gama02}, we have $2\nu^1_x-2\nu^2_x +\beta^2_{0,x}=\beta^1_{0,x}$. Since $\gamma(0,1,\tilde{x})>0$ and $\gamma(0,2,\tilde{x})>0$ we have  $\beta^2_{0,\tilde{x}}=\beta^1_{0,\tilde{x}}=0$ and  it results in $\nu^1_{\tilde{x}}=\nu^2_{\tilde{x}}$. Also since $\gamma(1,1,\tilde{x})>0$ and $\gamma(1,2,\tilde{x})>0$, which results in $\beta^2_{1,\tilde{x}}=\beta^1_{1,\tilde{x}}=0$, we must have $2\epsilon_3-\epsilon_2-\eta=2-2\epsilon_1-\epsilon_2+\eta$ or equivalently, $\epsilon_1+\epsilon_3=1+\eta$. According to equation \eqref{epsilon}, we must have $\eta=0$ and $\epsilon_2=\epsilon_4=0$. Further, for any  $x>0$ and each type $i\in  \cI$, we either have $\gamma(1,i,x)>0$ or $\gamma(0,i,x)>0$ or both. Therefore, either $\beta^i_{1,\tilde{x}}=0$ or $\beta^i_{0,\tilde{x}}=0$ or both. Assume $\beta^1_{0,\tilde{x}}=0$. Then if $\beta^2_{0,\tilde{x}}=0$, we must have $\nu^1_{\tilde{x}}=\nu^2_{\tilde{x}}$ and the result is proved. If $\beta^2_{1,\tilde{x}}=0$, then from equations  \eqref{gama01} and \eqref{gama02} we must have $\nu^2_{\tilde{x}}\geq\nu^1_{\tilde{x}}$. From equations \eqref{gama11} and \eqref{gama12}, we must have  $\nu^1_{\tilde{x}}\geq\nu^2_{\tilde{x}}$. Therefore, $\nu^1_{\tilde{x}}=\nu^2_{\tilde{x}}$. Therefore, due to equation \eqref{gama}, we must have $\beta^1_{0,x}=\beta^2_{0,x}$ and  $\beta^1_{1,x}=\beta^2_{1,x}$ for all $x>0$.
		\end{proof}
		Note that all of the results of Lemma \ref{equal} hold if we have $2\epsilon_3-\epsilon_2-\eta=2-2\epsilon_1-\epsilon_2+\eta$, i.e., the denominators of the coefficients in equations \eqref{gama11} and \eqref{gama12} are equal. In other words, Lemma \ref{equal}  states a condition in which we must have $2\epsilon_3-\epsilon_2-\eta=2-2\epsilon_1-\epsilon_2+\eta$ and consequently, the rest of the results hold. However, these coefficients might be equal without having the condition stated in Lemma \ref{equal}. In the next lemma, we show what the solution looks like if we have $2\epsilon_3-\epsilon_2-\eta=2-2\epsilon_1-\epsilon_2+\eta$.
		\begin{lemma}
			\label{lem:eq}
			If we have $2\epsilon_3-\epsilon_2-\eta=2-2\epsilon_1-\epsilon_2+\eta$, then there is a threshold $\tilde{x}$ such that  for $x\geq \tilde{x}$ we have $\gamma(s=1,i,x)=0$  for all $i \in \cI$, and for $x<\tilde{x}$,   $\gamma(s=0,i,x)=0$ for all $i \in \cI$ except for some points $\tilde{\cX}=\{x_1, x_2, \ldots\}, \ x_k <\tilde{x} $ for which we can have $\gamma(0,1,x_k)>0$, or $\gamma(0,2,x_k)>0$, where  all $x_k \in \tilde{\cX}$ satisfy the following condition. There exists $\epsilon_1>0$ and $\psi$ such that
				\begin{align*}
					(2\sum_{x=0}^{x_k} \lambda^x v(x))\epsilon_1 + (\sum_{x=0}^{x_k} \lambda^x )\psi=\sum_{x=0}^{x_k-1} \lambda^x v(x), \ \forall x_k \in \tilde{\cX}.
				\end{align*}
			
		\end{lemma}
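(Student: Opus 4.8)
The plan is to push the hypothesis $2\epsilon_3-\epsilon_2-\eta = 2-2\epsilon_1-\epsilon_2+\eta$ through the KKT system \eqref{gama} and collapse it to a single scalar recursion in the duals $\alpha_x$. First I would invoke the conclusions recorded after Lemma~\ref{equal}, which hold precisely under this equality of denominators: $\eta=0$, $\epsilon_2=\epsilon_4=0$, $\nu^1_x=\nu^2_x$, and $\beta^1_{0,x}=\beta^2_{0,x}$, $\beta^1_{1,x}=\beta^2_{1,x}$ for all $x>0$. Writing $\beta_{1,x}$ and $\beta_{0,x}$ for the common values and using \eqref{epsilon} to get $\epsilon_1+\epsilon_3=1$, the four relations \eqref{gama11}--\eqref{gama02} become type-independent and reduce to
\begin{align}
2\epsilon_3\, v(x) &= -\lambda\alpha_x + \alpha_{x-1} + \psi - \beta_{1,x},\nonumber\\
2\epsilon_1\, v(x) &= -\alpha_{x-1} - \psi + \beta_{0,x},\nonumber
\end{align}
with the $x=0$ versions omitting $\alpha_{-1}$. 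Complementary slackness reads $\gamma(1,i,x)>0\Rightarrow\beta_{1,x}=0$ and $\gamma(0,i,x)>0\Rightarrow\beta_{0,x}=0$, and dual feasibility keeps $\beta_{1,x},\beta_{0,x}\geq 0$.

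Next I would establish the threshold. The backlog constraint $\sum_{s,i}\gamma(s,i,x+1)=\lambda\sum_i\gamma(1,i,x)$ shows that $\gamma(1,i,x)=0$ for both $i$ forces $\mu(y)=0$, hence (by \eqref{indxi}) $\gamma(\cdot,\cdot,y)=0$, for every $y>x$; therefore the admitting states form an initial segment $\{0,\dots,\tilde{x}-1\}$, where $\tilde{x}$ is the first non-admitting state. By definition $\gamma(1,i,x)=0$ for $x\geq\tilde{x}$, giving the first assertion of the lemma, while for every $x<\tilde{x}$ admission is strictly positive, so $\beta_{1,x}=0$. Consequently the default at $x<\tilde{x}$ is pure admission, $\gamma(0,i,x)=0$; a positive rejection probability is allowed only at states with $\beta_{0,x}=0$, and I collect exactly these states into $\tilde{\cX}$.

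Granting this, \eqref{epsilon_psi-p} follows by telescoping. Since $\beta_{1,x}=0$ on $0\leq x\leq\tilde{x}-1$, the first reduced relation is the recursion $\lambda\alpha_x=\alpha_{x-1}+\psi-2\epsilon_3 v(x)$ with $\lambda\alpha_0=\psi-2\epsilon_3 v(0)$, whose solution is $\alpha_x=\lambda^{-x-1}\sum_{j=0}^{x}(\psi-2\epsilon_3 v(j))\lambda^{j}$. At a state $x_k\in\tilde{\cX}$ we have $\beta_{0,x_k}=0$ (and still $\beta_{1,x_k}=0$ because $x_k<\tilde{x}$), so the second reduced relation gives $\alpha_{x_k-1}=-2\epsilon_1 v(x_k)-\psi$. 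Equating this with the telescoped value of $\alpha_{x_k-1}$, multiplying through by $\lambda^{x_k}$, and eliminating $\epsilon_3$ by $\epsilon_3=1-\epsilon_1$ leaves a relation between $\epsilon_1$, $\psi$ and the $\lambda$-weighted partial sums of $v$, which is the condition \eqref{epsilon_psi-p}; this last step is only geometric-sum bookkeeping.

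I expect the main obstacle to be the structural Step~2 rather than the algebra: one must argue cleanly that admission is positive at every state below $\tilde{x}$ (so that the recursion for $\alpha_x$, and hence the telescoped formula used at each $x_k$, is valid over the whole segment $\{0,\dots,x_k\}$), and separately handle the $x=0$ boundary term that was excluded from \eqref{gama}. The finiteness of $\tilde{\cX}$ --- the claim that for a generic $v(\cdot)$ it contains at most two points --- is not part of this lemma and is only argued heuristically in the surrounding discussion, by counting $|\tilde{\cX}|$ equations \eqref{epsilon_psi-p} against the two free parameters $\epsilon_1$ and $\psi$.
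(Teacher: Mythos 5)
Your proposal is correct and takes essentially the same route as the paper's proof: the same dual-variable collapse recorded after Lemma~\ref{equal} ($\eta=0$, $\epsilon_2=\epsilon_4=0$, type-symmetric $\nu$'s and $\beta$'s), the same argument that zero admission at one state forces zero stationary mass (hence zero $\gamma$) at all higher states, and the same algebra at the exceptional states---the paper telescopes the $\lambda^x$-weighted recursion and eliminates $\alpha_{x_k}$ via $v(x_k)=-\tfrac{\lambda}{2}\alpha_{x_k}$, which combines exactly the two reduced KKT relations you use, just arranged differently. One immaterial discrepancy: carried out exactly, the bookkeeping yields $2\sum_{x=0}^{x_k-1}\lambda^x v(x)$ on the right-hand side of \eqref{epsilon_psi-p} (the paper's own proof drops this factor of $2$ in its final implication), but since the condition is linear in $(\epsilon_1,\psi)$ and doubling both preserves $\epsilon_1>0$, solvability---and hence the lemma's conclusion---is unaffected.
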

		\begin{proof}
			If we have $2\epsilon_3-\epsilon_2-\eta=2-2\epsilon_1-\epsilon_2+\eta$,  then $\eta=0$, $\epsilon_2=\epsilon_4=0$, and $\nu^1_x=\nu^2_x$ for all $x>0$. Further, we have $\beta^1_{0,x}=\beta^2_{0,x}$ and  $\beta^1_{1,x}=\beta^2_{1,x}$. This results in the following.
			\begin{subequations}
				\begin{align}
					&v(x)=\frac{-\lambda \alpha_x+\alpha_{x-1}+\psi-\beta^1_{1,x}}{2-2\epsilon_1} \label{gama11-eq}\\
					&=\frac{-\lambda \alpha_x+\alpha_{x-1} +\psi-\beta^2_{1,x}}{2-2\epsilon_1}\label{gama12-eq} \\
					&=\frac{-\alpha_{x-1} -\psi+\beta^1_{0,x}}{2\epsilon_1}=\frac{-\alpha_{x-1} -\psi+\beta^2_{0,x}}{2\epsilon_1}.\label{gama02-eq}
				\end{align}
				\label{gama-eq}
			\end{subequations}
			From the above equations, we can conclude that $\beta^1_{0,x}=\beta^2_{0,x}$ and $\beta^1_{1,x}=\beta^2_{1,x}$. It means that we if we have $\beta^1_{1,\tilde{x}}=\beta^2_{1,\tilde{x}}>0$ which results in $\gamma(s=1,i,\tilde{x})=0$ for all $i \in \cI$, the stationary distribution $\mu(x)$ is zero for all $x>\tilde{x}$ and therefore $\gamma(s=1,i,x)=0$ for all $i \in \cI$.   For $x<\tilde{x}$, we  have $\beta^1_{1,x}=\beta^2_{1,x}=0$.  We  also either have $\beta^1_{0,x}=\beta^2_{0,x}>0$, which results in $\gamma(s=0,i,x)=0$ for all $i \in \cI$ or we have $\beta^1_{0,x}=\beta^2_{0,x}=0$ which can allow us to have  $\gamma(0,1,x)>0$, $\gamma(0,2,x)>0$. Suppose either $\gamma(0,1,x)>0$ or $\gamma(0,2,x)>0$ for $x \in \{x_1, x_2, \ldots\}$. By writing the equation \eqref{gama-eq}  for $x=0$ and due to the fact that at least one of $\beta^1_{1,0}$ or $\beta^2_{1,0}$ is zero, we have $	v(0)=\frac{1}{2-2\epsilon_1}(-\lambda \alpha_0+\psi)$.
		
			For $x \in \{x_1, x_2, \ldots\}$ we have
			\begin{align}
				v(x)=\frac{-\lambda \alpha_x+\alpha_{x-1}+\psi}{2-2\epsilon_1}=\frac{-\alpha_{x-1} -\psi}{2\epsilon_1}
				\label{v0}
			\end{align}
			which results in
				$v(x)=-\frac{\lambda}{2}\alpha_{x}$.
			Using equations \eqref{gama-eq} and\eqref{v0} we have
			\begin{align}
				&\sum_{x=0}^{x_1} \lambda^x v(x)=\frac{1}{2-2\epsilon_1} (-\lambda^{x_1+1}\alpha_{x_1}+\psi\sum_{x=0}^{x_1} \lambda^x )\\
				& \Rightarrow \ (2\sum_{x=0}^{x_1} \lambda^x v(x))\epsilon_1 + (\sum_{x=0}^{x_1} \lambda^x )\psi=\sum_{x=0}^{x_1-1} \lambda^x v(x). \label{epsilon_psi}
			\end{align}
			having $x_1$, the above is an equation with respect to $\epsilon$ and $\psi$. In general, if we have $x_1$ and $x_2$, we should be able to determine $\epsilon$ and $\psi$ and we can have $x_k$ for $k\geq 3$ only if they result in linearly dependent equations in \eqref{epsilon_psi} and this might not be true for general $v(\cdot)$.
			
			Note that if for all $x<\tilde{x}$ we have $\gamma(s=0,i,x)=0$ for all $i \in \cI$, then $q(1)=q(2)$ and the objective of the information designer would be zero. Therefore, this is probably not the solution of the optimization problem. In order to create discrimination between users of type 1 and type 2, the designer can only consider different policies for these two types  at $x_k \in \tilde{\cX}$.
		\end{proof}
		In Lemma \ref{lem:eq}, we investigated the solution under the assumption of  $2\epsilon_3-\epsilon_2-\eta=2-2\epsilon_1-\epsilon_2+\eta$. Note that due to equation \eqref{epsilon}, we can not have  $2\epsilon_3-\epsilon_2-\eta>2-2\epsilon_1-\epsilon_2+\eta$. Therefore if the equality does not hold, we have  $2\epsilon_3-\epsilon_2-\eta<2-2\epsilon_1-\epsilon_2+\eta$. In the next lemma we present some results under this inequality assumption.
		\begin{lemma}
			If  $2\epsilon_3-\epsilon_2-\eta<2-2\epsilon_1-\epsilon_2+\eta$, then we have the following.
		 If $v(x)> 0$ and $\sigma(s=1|1,x)>0$, we have $\sigma(s=1|2,x)=1$.
				Furthermore,  if $v(x)<0$ and $\sigma(s=1|2,x)>0$,  we have $\sigma(s=1|1,x)=1$.
		\end{lemma}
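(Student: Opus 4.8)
The plan is to extract the claim directly from the KKT stationarity relations \eqref{gama} together with complementary slackness, exactly as in the proofs of Lemmas~\ref{equal} and~\ref{lem:eq}. Write $A=2\epsilon_3-\epsilon_2-\eta$ and $B=2-2\epsilon_1-\epsilon_2+\eta$ for the denominators appearing in \eqref{gama11} and \eqref{gama12}, so that the hypothesis is precisely $A-B<0$. Recall that $\gamma$ and $\sigma$ are linked by $\gamma(1,i,x)=\mu(x)P_I(i)\sigma(1|x,i)$ and $\gamma(0,i,x)=\mu(x)P_I(i)(1-\sigma(1|x,i))$, so at any state with $\mu(x)>0$ we have $\sigma(1|x,i)>0\iff\gamma(1,i,x)>0$ and $\sigma(1|x,i)=1\iff\gamma(0,i,x)=0$; complementary slackness then gives $\gamma(s,i,x)>0\Rightarrow\beta^i_{s,x}=0$.

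The key step is to isolate a single scalar identity relating $v(x)$ to the multipliers $\beta$. First I would subtract \eqref{gama02} from \eqref{gama01}: the $\alpha_{x-1}$ and $\psi$ terms cancel and one is left with $2(\nu^1_x-\nu^2_x)=\beta^1_{0,x}-\beta^2_{0,x}$. Next I would subtract \eqref{gama12} from \eqref{gama11}, where the common part $-\lambda\alpha_x+\alpha_{x-1}+\psi$ cancels, leaving $v(x)(A-B)=2(\nu^1_x-\nu^2_x)-\beta^1_{1,x}+\beta^2_{1,x}$. Substituting the first relation into the second yields
\begin{align}
	v(x)(A-B)=\beta^1_{0,x}-\beta^2_{0,x}-\beta^1_{1,x}+\beta^2_{1,x}. \label{key-id}
\end{align}

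From here each claim follows by a sign argument. For the first claim, assume $v(x)>0$ and $\sigma(1|x,1)>0$; the latter gives $\gamma(1,1,x)>0$, hence $\beta^1_{1,x}=0$. Since $A-B<0$ the left-hand side of \eqref{key-id} is strictly negative, so $\beta^1_{0,x}+\beta^2_{1,x}<\beta^2_{0,x}$; as every multiplier is nonnegative this forces $\beta^2_{0,x}>0$, whence $\gamma(0,2,x)=0$ and $\sigma(1|x,2)=1$. The second claim is the mirror image: with $v(x)<0$ and $\sigma(1|x,2)>0$ we get $\beta^2_{1,x}=0$, the right-hand side of \eqref{key-id} must then be strictly positive, which forces $\beta^1_{0,x}>0$, i.e.\ $\gamma(0,1,x)=0$ and $\sigma(1|x,1)=1$.

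The conceptual content is entirely in the identity \eqref{key-id}; after it, the sign of $A-B$ supplied by the hypothesis and the sign of $v(x)$ pin down which nonnegative multiplier must be strictly positive. I expect the only real care needed is bookkeeping: tracking the signs of the two denominators and of $\nu^1_x-\nu^2_x$, and checking that the two subtractions cancel exactly the terms claimed. A minor point is that these stationarity equations were derived for $x>0$, so the argument applies at interior states; states with $\mu(x)=0$ carry no recommendation and need not be considered.
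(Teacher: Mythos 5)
Your proposal is correct and takes essentially the same route as the paper's proof: both work directly from the stationarity relations \eqref{gama} together with complementary slackness, the nonnegativity of the multipliers, and the hypothesis $2\epsilon_3-\epsilon_2-\eta<2-2\epsilon_1-\epsilon_2+\eta$. The difference is only organizational — your single identity $v(x)\bigl((2\epsilon_3-\epsilon_2-\eta)-(2-2\epsilon_1-\epsilon_2+\eta)\bigr)=\beta^1_{0,x}-\beta^2_{0,x}-\beta^1_{1,x}+\beta^2_{1,x}$ merges the paper's two intermediate relations (its strict inequality on the $\beta^i_{1,x}$ and its relation \eqref{eq0}) and thereby replaces the paper's three-way case analysis on the sign of $\nu^1_x-\nu^2_x$, including the contradiction step that invokes $\mu(x)>0$, with a one-line sign argument.
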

		\begin{IEEEproof}
			Looking at equation \eqref{gama}, if $v(x)> 0$, since $2\epsilon_3-\epsilon_2-\eta<2-2\epsilon_1-\epsilon_2+\eta$, we must have the following.
			\begin{align}
				v(x)&=\frac{\alpha_{x-1}-\lambda \alpha_x+\nu^1_x-\nu^2_x+\psi-\beta^1_{1,x}}{2\epsilon_3-\epsilon_2-\eta}\\
				&=\frac{\alpha_{x-1}-\lambda \alpha_x+\nu^2_x-\nu^1_x +\psi-\beta^2_{1,x}}{2-2\epsilon_1-\epsilon_2+\eta}\\
				& \Rightarrow \ \alpha_{x-1}-\lambda \alpha_x+\nu^1_x-\nu^2_x+\psi-\beta^1_{1,x}\\& \quad \quad <\alpha_{x-1}-\lambda \alpha_x+\nu^2_x-\nu^1_x +\psi-\beta^2_{1,x}\\
				&  \Rightarrow \ \beta^2_{1,x}+2\nu^1_x-2\nu^2_x<\beta^1_{1,x}
			\end{align}
			We also have the following.
			\begin{align}
				&-\nu^1_x+\nu^2_x+\beta^1_{0,x}=-\nu^2_x+\nu^1_x +\beta^2_{0,x}\\
				&\Rightarrow\ \beta^1_{0,x}=\beta^2_{0,x}+2\nu^1_x -2\nu^2_x \label{eq0}
			\end{align}
			We can have three cases for $\nu^1_x-\nu^2_x$. We either have $\nu^1_x> \nu^2_x$, which results in $\beta^2_{1,x}<\beta^1_{1,x}$, which means that $\beta^1_{1,x}>0$ and therefore, $\gamma(s=1,1,x)=0$. On the other hand, we must have  $\beta^1_{0,x}>0$ and therefore, $\gamma(s=0,1,x)=0$. This is a contradiction for those $x$'s that $\mu(x)>0$, which are the ones that we are interested in. If we have $\nu^1_x< \nu^2_x$, we must have  $\beta^2_{0,x}>0$, and therefore,  $\gamma(s=0,2,x)=0$. If we have $\nu^1_x=\nu^2_x$, we must have $\beta^1_{1,x}>0$ and therefore, $\gamma(s=1,1,x)=0$. Therefore, for $v(x)> 0$, we have $\sigma(s=1|2,x)=1$ if $\sigma(s=1|1,x)>0$.
			If $v(x)<0$, then we must have
			\begin{align}
				& \alpha_{x-1}-\lambda \alpha_x+\nu^1_x-\nu^2_x+\psi-\beta^1_{1,x}\\& \quad \quad >\alpha_{x-1}-\lambda \alpha_x+\nu^2_x-\nu^1_x +\psi-\beta^2_{1,x}\\
				&  \Rightarrow \ \beta^1_{1,x}+2\nu^2_x-2\nu^1_x<\beta^2_{1,x}.
			\end{align}
			Consequently, if $\nu^2_x>\nu^1_x$, we have $\beta^2_{1,x}>0$ and therefore, $\gamma(s=1,2,x)=0$. On the other hand, due to equation \eqref{eq0}, we have  $\beta^2_{0,x}>0$ and therefore, $\gamma(s=0,2,x)=0$, which is a contradiction. Hence, $\nu^2_x\leq\nu^1_x$. If $\nu^2_x<\nu^1_x$,  we have $\beta^1_{0,x}>0$ and so $\gamma(s=0,1,x)=0$. If $\nu^2_x=\nu^1_x$, we have $\beta^2_{1,x}>0$ and therefore, $\gamma(s=1,2,x)=0$. Hence, for $v(x)<0$ we have $\sigma(s=1|1,x)=1$ if $\sigma(s=1|2,x)>0$.
		\end{IEEEproof}}

	\bibliographystyle{IEEEtran}
\optv{submission}{	\input{root_bib.bbl}}
\optv{arxiv}{	

}

\end{document}